
\documentclass[12pt]{article}
\usepackage[left=3cm,top=3cm,right=3cm,bottom=3cm]{geometry}
\usepackage{geometry}
\usepackage{amsmath,amsfonts,amssymb}
\usepackage{natbib}
\usepackage{changepage}
\usepackage{graphicx}

\usepackage{url}
\usepackage[colorlinks=true, a4paper=true, pdfstartview=FitV,linkcolor=blue, citecolor=blue,urlcolor=blue]{hyperref}

\def\T{{ \mathrm{\scriptscriptstyle T} }}
\newcommand{\B}{\textbf}

\newcommand{\tD}{\tilde}
\newcommand{\M}{\mbox}
\DeclareMathOperator{\Diag}{\mathbf{Diag}}

\DeclareMathOperator{\tr}{\mathbf{tr}}

\numberwithin{equation}{section}

\newtheorem{theorem}{Theorem}[section]
\newtheorem{lemma}[theorem]{Lemma}
\newtheorem{proposition}[theorem]{Proposition}
\newtheorem{corollary}[theorem]{Corollary}

\newenvironment{proof}[1][Proof]{\begin{trivlist}
\item[\hskip \labelsep {\bfseries #1}]}{\end{trivlist}}

\newenvironment{assumption}[1][Assumption]{\begin{trivlist}
\item[\hskip \labelsep {\bfseries #1}]}{\end{trivlist}}

\newcommand{\qed}{\nobreak \ifvmode \relax \else
      \ifdim\lastskip<1.5em \hskip-\lastskip
      \hskip1.5em plus0em minus0.5em \fi \nobreak
      \vrule height0.75em width0.5em depth0.25em\fi}

\usepackage{authblk}

\title{Reduced rank regression via adaptive nuclear norm penalization}
\author[1]{Kun Chen\thanks{Corresponding author; \href{mailto:kunchen@ksu.edu}{kunchen@ksu.edu}}}
\author[2]{Hongbo Dong}%\thanks{hdong6@wisc.edu}}
\author[3]{Kung-Sik Chan}%\thanks{kung-sik-chan@uiowa.edu}}
\affil[1]{Department of Statistics, Kansas State University}
\affil[2]{Wisconsin Institutes for Discovery, University of Wisconsin Madison}
\affil[3]{Department of Statistics and Actuarial Science, University of Iowa}

\date{\today}

\begin{document}

\maketitle

\begin{abstract}
Adaptive nuclear-norm penalization is proposed for low-rank matrix approximation, by which we develop a new reduced-rank estimation method for the general high-dimensional multivariate regression problems. The adaptive nuclear norm of a matrix is defined as the weighted sum of the singular values of the matrix. For example, the pre-specified weights may be some negative power of the singular values of the data matrix (or its projection in regression setting). The adaptive nuclear norm is generally non-convex under the natural restriction that the weight decreases with the singular value. However, we show that the proposed non-convex penalized regression method has a global optimal solution obtained from an adaptively soft-thresholded singular value decomposition. This new reduced-rank estimator is computationally efficient, has continuous solution path and possesses better bias-variance property than its classical counterpart. The rank consistency and prediction/estimation performance bounds of the proposed estimator are established under high-dimensional asymptotic regime. Simulation studies and an application in genetics demonstrate that the proposed estimator has superior performance to several existing methods. The adaptive nuclear-norm penalization can also serve as a building block to study a broad class of singular value penalties.
\end{abstract}

%\begin{keywords}
%Adaptive nuclear-norm penalization; Low-rank approximation; Reduced-rank regression; Singular value decomposition.
%\end{keywords}
%

\section{Introduction}

Given $n$ observations of the response $\B{y}_i\in \Re^q$ and predictor $\B{x}_i\in \Re^p$, we consider the multivariate linear regression model:
\begin{equation}
\B{Y} = \B{X}\B{C} +\B{E}, \label{model1}
\end{equation}
where $\B{Y} = (\B{y}_{1},...,\B{y}_n)^\T$, $\B{X}=(\B{x}_1,...,\B{x}_n)^\T$, $\B{C}$ is a $p\times q$ coefficient matrix, and $\B{E}=(\B{e}_1,...,\B{e}_n)^\T$ is a random $n\times q$ matrix of independently and identically distributed random errors with mean zero and variance $\sigma^2$.

We are interested in the scenario when both the number of predictors $p$ and the number of responses $q$ may depend on and even exceed the sample size $n$. Such high-dimensional regression problems are increasingly encountered in quantitative investigations. It is well known that ordinary least squares (OLS) estimation is equivalent to separately regressing each response on the set of predictors, which, however, ignores the dependence structure of the multivariate response and may be infeasible in high-dimensional settings. The curse of dimensionality may be mitigated by assuming the coefficient matrix $\B{C}$ admit some low-dimensional structure and employing the regularization/penalization approach for model estimation. For example, for Gaussian data, it is appropriate to conduct model estimation by penalized least squares (PLS):
\begin{align}
%\min_{(\Theta,\Delta)}
%\left\{
%\mathcal{F}(\B{C})\equiv
\frac{1}{2}\mathcal{J}(\B{C})+ \mathcal{P}_\lambda(\B{C}),\label{criterion1}
% \right\}
\end{align}
where $\mathcal{J}(\B{C})=\|\B{Y}-\B{X}\B{C}\|_F^2$ is the sum of squared error with $\|\cdot\|_F$ denoting the Frobenius norm, $\mathcal{P}_{\lambda}(\cdot)$ is some penalty function measuring the  ``size'' (complexity)  of the enclosed matrix, and $\lambda$ is a non-negative tuning parameter controlling the degree of penalization.

Within this general framework, an important model is reduced-rank regression (RRR) \citep{anderson1951,anderson1999, anderson2002, izenman1975, reinsel1998}, in which dimension reduction is achieved by assuming that the coefficient matrix $\B{C}$ is of low rank, i.e., its rank  $r(\B{C})=r^*<\min(p,q)$. The classical RRR literature mainly focuses on small $p$ cases and maximum likelihood estimation.
\citet{bunea2011} proposed the rank selection criterion (RSC) for high dimensional settings, and revealed that rank constrained estimation can be viewed as a PLS method (\ref{criterion1}) with the penalty proportional to the rank of the coefficient matrix. This $l_0$-type penalty can be alternatively cast as a penalty in terms of the number of non-zero singular values of $\B{C}$, i.e., $\mathcal{P}_\lambda(\B{C})=\lambda r(\B{C})=\lambda\sum_k I(d_k(\B{C})\neq 0)$ where $I(\cdot)$ is the indicator function, which results in an estimator obtained by hard-thresholded singular value decomposition (SVD), see Section \ref{sec2}. \citet{yuan2007} proposed a nuclear-norm penalized least squares estimator (NNP), in which the nuclear norm penalty is defined as $\mathcal{P}_\lambda(\B{C})=\lambda\|\B{C}\|_*=\lambda\sum_k d_k(\B{C})$, where $\|\cdot\|_*$ denotes the nuclear norm or the sum of the singular values of the enclosed matrix. This $l_1$-type penalty encourages sparsity among the singular values and achieves simultaneous rank reduction and shrinkage coefficient estimation \citep{negahban2011,bunea2011,Lu2012}. \citet{Rohde2011} investigated the theoretical properties of the Schatten-$q$ quasi-norm penalty, which is defined as $\mathcal{P}_\lambda(\B{C})=\lambda\sum_{k=1}^{r} d^q_k(\B{C})$ for $0<q\leq 1$, and nonasymptotic bounds of prediction risk were obtained. Several other methods or theoretical development related to reduced-rank regression exist, see, e.g., \citet{aldrin2000}, \citet{negahban2011}, \citet{mukh2011}, and \citet{chen2011jrssb}. The reduced-rank metrology has connections with many popular tools including principal component analysis and canonical correlation analysis, and it is extensively studied in matrix completion problems \citep{candes2009,candes2011,koltch2011}.

It is evident that the aforementioned reduced-rank approaches are closely related to the SVD method in matrix approximation \citep{eckart1936,reinsel1998}. It is also intriguing that the rank and nuclear norm penalized methods can be viewed as $l_0$ and $l_1$ penalized methods in the SVD domain, respectively. (In fact, the $l_2$ norm of the singular values is equivalent to a ridge penalty.) Motivated by these connections, we propose the adaptive nuclear norm regularization method, in which the adaptive nuclear norm (ANN) of a matrix $\B{C}$ is defined as a weighted sum of its singular values; see \citet{xu2009} for a similar idea related to reweighted penalization in the context of matrix completion. Clearly, the key is to close the gap between the nonsmooth $l_0$ rank penalty and the $l_1$ nuclear penalty while keeping the computation stable and efficient for high dimensional data. We show that the convexity of ANN depends on the ordering of the weights. Particularly, the ANN turns out to be non-convex for the case that the weight decreases with the singular value, a condition needed for a meaningful regularization, see Section \ref{sec2} for further discussion. Despite the non-convexity, we are able to characterize the explicit global optimal solution for the ANN penalized matrix approximation problems.

Based on ANN, we develop a new method of simultaneous dimension reduction and coefficient estimation for the general high-dimensional multivariate regression. Our proposal is based on two main ideas. Firstly, by penalizing the singular values adaptively, our method builds a bridge between the RSC and NNP methods, and it may be viewed as analogous to the adaptive Lasso \citep{tib1996,zou2006} developed for univariate regression. Secondly, the ANN penalty is applied to $\B{XC}$ rather than $\B{C}$ \citep{koltch2011}; although the criterion remains non-convex, this setup allows the reduced-rank estimation to be solved explicitly and efficiently. Comparing to the computationally intensive NNP method which tends to overestimate the rank, the proposed ANN method may improve rank determination with the aid of some well-chosen adaptive weights. Comparing to the RSC method, the smooth ANN penalty results in a continuous solution path and allows more flexible bias and variance tradeoff in model fitting. The rank consistency and prediction/estimation performance bounds of the proposed estimator are established under high-dimensional asymptotic regime. We discuss the incorporation of an extra $l_2$ penalization for improving reduced-rank estimation. Empirical studies demonstrate that the proposed methods enjoy superior performance in both prediction and rank estimation as compared to several existing methods.

\section{Adaptive nuclear norm penalty}\label{sec2}

To study the general properties of the adaptive nuclear norm, we consider the low-rank matrix approximation problem, $\B{Y} = \B{C} + \B{E}$, which is a special case of model (\ref{model1}) when $\B{X}$ becomes an identity matrix and $n=p$. In many applications, given the noisy data matrix $\B{Y}$, it is of  interest to seek its low-rank approximation for denoising, which can be achieved by various methods, e.g., rank penalization or nuclear norm penalization. These methods are closely related to the SVD method in low-rank matrix approximation, which motivated our study.

Consider the SVD of $\B{Y} \in \Re^{n\times q}$,
\begin{equation}
\B{Y}=\B{UDV}^\T, \qquad \B{D}=\Diag\{\B{d}(\B{C})\}=\Diag\{\B{d}\},\label{svdQ}
\end{equation}
where $\B{U}$ and $\B{V}$ are respectively $p\times h$ and $q \times h$ orthonormal matrices with $h=\min(p,q)$, and the vector of singular values $\B{d}=(d_1,...,d_h)$ consists of non-increasing non-negative singular values of $\B{Y}$. For any $\lambda \geq 0$, define the hard SVD-thresholding operator (HSVT)
\begin{equation}
    \mathcal{H}_\lambda(\B{Y})=\B{U}\mathcal{H}_\lambda(\B{D})\B{V}^\T, \qquad \mathcal{H}_\lambda(\B{D})=\Diag\{d_iI(d_i>\lambda),i=1,...,h\},\label{hth}
\end{equation}
where $I(\cdot)$ is the indicator function, and the soft SVD-thresholding operator (SSVT)
\begin{equation}
    \mathcal{S}_\lambda(\B{Y})=\B{U}\mathcal{S}_\lambda(\B{D})\B{V}^\T, \qquad \mathcal{S}_\lambda(\B{D})=\Diag\{(d_i-\lambda)_{+},i=1,...,h\},\label{sth}
\end{equation}
where $x_+$ is the non-negative part of $x$, namely, $x_+ = \max(0, x)$. It is well-known that in the matrix approximation problems ($\B{X}=\B{I}$), an HSVT estimator solves (\ref{criterion1}) with the $l_0$-type rank penalization \citep{eckart1936}, while a SSVT estimator solves (\ref{criterion1}) with the $l_1$-type nuclear-norm penalization \citep{cai2008}; these results are summarized in the proposition below.

\begin{proposition}\label{prop1}
For any $\lambda \geq 0$ and $\B{Y}\in \Re^{n\times q}$, the HSVT operator $\mathcal{H}_\lambda(\B{Y})$ defined by  (\ref{hth}) can be characterized as:
$
\mathcal{H}_\lambda(\B{Y})=\arg\min_\B{C}\{\|\B{Y}-\B{C}\|^2_F+\lambda^2 \M{r}(\B{C})\}
$,  and the SSVT operator $\mathcal{S}_\lambda(\B{Y})$ in (\ref{sth}) is similarly characterized as:
$
\mathcal{S}_\lambda(\B{Y})=\arg\min_\B{C}\{\frac{1}{2}\|\B{Y}-\B{C}\|^2_F+\lambda \|\B{C}\|_*\}
$.
\end{proposition}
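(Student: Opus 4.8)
The plan is to reduce both matrix minimization problems to a separable scalar problem in the singular values, exploiting the orthogonal invariance of the Frobenius norm, the rank, and the nuclear norm. The common device is von Neumann's trace inequality: for any $\B{C}$ with ordered singular values $\sigma_1\ge\cdots\ge\sigma_h\ge 0$, one has $\tr(\B{Y}^\T\B{C})\le\sum_i d_i\sigma_i$, with equality attained precisely when $\B{C}$ admits a singular value decomposition sharing the left and right singular vectors $\B{U},\B{V}$ of $\B{Y}$. This lets me argue in each case that an optimal $\B{C}$ can be taken to be simultaneously diagonalizable with $\B{Y}$, after which only the singular values remain to be optimized.

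For the SSVT claim I would first expand the objective as
\[
\tfrac{1}{2}\|\B{Y}-\B{C}\|_F^2 + \lambda\|\B{C}\|_* = \tfrac{1}{2}\|\B{Y}\|_F^2 - \tr(\B{Y}^\T\B{C}) + \tfrac{1}{2}\sum_i \sigma_i^2 + \lambda\sum_i \sigma_i.
\]
Applying von Neumann's inequality to the cross term yields the lower bound $\tfrac{1}{2}\sum_i(d_i-\sigma_i)^2 + \lambda\sum_i\sigma_i$, which depends on $\B{C}$ only through its singular values and is attained when $\B{C}$ shares the singular vectors of $\B{Y}$. The problem then decouples into the scalar problems $\min_{\sigma\ge 0}\tfrac{1}{2}(d_i-\sigma)^2 + \lambda\sigma$, each solved by the soft-thresholding rule $\sigma_i^*=(d_i-\lambda)_+$. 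Since $d_i$ is non-increasing and soft-thresholding is monotone, the resulting $\sigma_i^*$ are automatically ordered, so $\B{C}^*=\B{U}\,\mathcal{S}_\lambda(\B{D})\,\B{V}^\T=\mathcal{S}_\lambda(\B{Y})$ is a valid minimizer.

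For the HSVT claim I would argue by first fixing the rank. For each integer $r$, the Eckart--Young--Mirsky theorem (itself a consequence of von Neumann's inequality) gives $\min_{r(\B{C})=r}\|\B{Y}-\B{C}\|_F^2=\sum_{i>r}d_i^2$, attained by the truncated SVD keeping the top $r$ singular values. The penalized objective viewed as a function of the chosen rank is therefore $f(r)=\sum_{i>r}d_i^2+\lambda^2 r$, and since $f(r+1)-f(r)=\lambda^2-d_{r+1}^2$, it is profitable to retain the $(r+1)$th singular component exactly when $d_{r+1}>\lambda$. Hence the minimizing rank equals $\#\{i:d_i>\lambda\}$ and the optimal estimator keeps precisely the singular values exceeding $\lambda$, which is $\mathcal{H}_\lambda(\B{Y})$.

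The substantive ingredient in both parts is the trace inequality together with its equality condition; once that is in hand, the remaining steps are the scalar soft-thresholding computation and the discrete rank optimization, both elementary. The points requiring care are the equality case of von Neumann's inequality (ensuring an optimizer can indeed be chosen to share singular vectors with $\B{Y}$), the verification that the scalar minimizers respect the ordering constraint on singular values, and non-uniqueness at boundary values --- in particular the tie when some $d_i=\lambda$, where the strict convention $d_i>\lambda$ in (\ref{hth}) simply selects one of several optima. Convexity of the SSVT objective additionally guarantees that the constructed point is the global minimizer, whereas for the HSVT objective global optimality follows directly from the finite comparison over ranks.
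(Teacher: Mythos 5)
Your proof is correct and follows essentially the same route as the paper: the paper states Proposition \ref{prop1} as a known result (citing Eckart--Young for the rank penalty and Cai et al. for the nuclear norm), but its own proof of the more general Theorem \ref{thm:optimality} is precisely your argument --- von Neumann's trace inequality reducing the matrix problem to a separable scalar problem in the singular values --- of which your SSVT derivation is the unit-weight special case. Your HSVT part, combining Eckart--Young with the discrete comparison $f(r+1)-f(r)=\lambda^2-d_{r+1}^2$, is the standard argument behind the cited result, and your attention to the equality condition in the trace inequality, the ordering of the thresholded singular values, and the non-uniqueness when some $d_i=\lambda$ covers the points that need care.
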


The hard-thresholding operator $\mathcal{H}_\lambda(\B{Y})$ eliminates any singular values below a threshold value $\lambda$, while the soft-thresholding operator $\mathcal{S}_\lambda(\B{Y})$ shrinks all the singular values by the same amount $\lambda$ towards zero. These two SVD operators are natural extensions of the hard/soft-thresholding rules for scalars and vectors \citep{donoho1995,cai2008}. Generally, estimators based on hard-thresholding often have small bias but may suffer from large variance; in contrast, soft-thresholding reduces variance by introducing extra bias in the estimators, which may be preferable in cases when data are noisy and highly correlated \citep{donoho1995}.

The connections between different thresholding rules and penalty terms motivated us to consider the adaptive nuclear norm penalization (ANN). The main idea is to build a bridge between the $l_0$ and $l_1$ penalties or the HSVT and SSVT rules so as to fine-tune the bias-variance tradeoff in the SVD domain. We define the adaptive nuclear norm of a matrix $\B{C}\in \Re^{p\times q}$:
\begin{equation}
f(\B{C})=\|\B{C}\|_{*\B{w}}=\sum_{i=1}^{h} w_i d_i(\B{C}),\label{def:ann}
\end{equation}
where $h=\min(p,q)$, $d_i(\cdot)$ is the $i$th largest singular value of the enclosed matrix, and the $w_i$s are the non-negative weights.

Since the nuclear norm is convex and is a matrix norm, a natural question arises as to whether or not its weighted extension ANN preserves the convexity, which is the case for lasso and adaptive lasso penalties in the vector case \citep{zou2006}. However, our analysis shows that the convexity of the ANN depends on the ordering of the non-negative weights. The following theorem gives a necessary and sufficient condition of its convexity.

\begin{theorem}\label{thm:convexity}
For any matrix $\B{C}\in \Re^{p\times q}$ ($n=p$, $h=\min(p,q)$), let $f(\B{C})=\|\B{C}\|_{*\B{w}}$ defined in (\ref{def:ann}). Then $f(\cdot)$ is convex if and only if $w_1\geq w_2 \geq\cdots\geq w_h\geq 0$.
\end{theorem}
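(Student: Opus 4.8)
The plan is to prove the two implications separately, relying on a decomposition of $f$ into Ky Fan norms for the sufficiency direction and on an explicit two-point counterexample for the necessity direction.

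For the ``if'' direction, I would first recall that the Ky Fan $k$-norm $\|\B{C}\|_{(k)}=\sum_{i=1}^{k}d_i(\B{C})$, the sum of the $k$ largest singular values, is a genuine matrix norm and hence a convex function of $\B{C}$. The cleanest justification is its variational form
\[
\|\B{C}\|_{(k)}=\max\bigl\{\tr(\B{P}^\T\B{C}\B{Q}):\ \B{P}^\T\B{P}=\B{I}_k,\ \B{Q}^\T\B{Q}=\B{I}_k\bigr\},
\]
taken over $p\times k$ and $q\times k$ matrices with orthonormal columns, which exhibits $\|\cdot\|_{(k)}$ as a pointwise maximum of linear functionals $\B{C}\mapsto\langle\B{P}\B{Q}^\T,\B{C}\rangle$ and is therefore convex. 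Setting $w_{h+1}:=0$ and applying summation by parts (Abel's identity) to the partial sums $S_k=\|\B{C}\|_{(k)}$, I would rewrite
\[
f(\B{C})=\sum_{i=1}^{h}w_i\,d_i(\B{C})=\sum_{k=1}^{h}(w_k-w_{k+1})\,\|\B{C}\|_{(k)}.
\]
Under the hypothesis $w_1\geq\cdots\geq w_h\geq 0$ every coefficient $w_k-w_{k+1}$ is non-negative (the last being $w_h\geq 0$), so $f$ is a non-negative linear combination of convex functions and is thus convex.

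For the ``only if'' direction, I would argue by contraposition: a non-negative sequence fails to be non-increasing exactly when $w_j<w_{j+1}$ for some index $j$, so it suffices to exhibit, for each such adjacent violation, two matrices at which midpoint convexity fails. The idea is to use diagonal matrices that agree except for a transposition of two diagonal entries. Concretely, I would fix scalars $a>b>0$ together with a strictly decreasing ``background'' profile on the remaining $h-2$ coordinates, chosen so that $a,b$ occupy precisely the $j$th and $(j+1)$th slots of the sorted spectrum, and let $\B{A}$ and $\B{B}$ be the diagonal matrices obtained by placing $(a,b)$ versus $(b,a)$ in coordinates $j,j+1$. Since $a>b$, the two matrices share the same sorted singular values, so $f(\B{A})=f(\B{B})$, whereas $(\B{A}+\B{B})/2$ replaces both swapped entries by $(a+b)/2$ without disturbing their rank order.

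A short computation then reduces the midpoint inequality $f\bigl((\B{A}+\B{B})/2\bigr)\leq\tfrac12\bigl(f(\B{A})+f(\B{B})\bigr)$ to the scalar condition $(w_j-w_{j+1})(a-b)\geq 0$, which is violated because $a>b$ while $w_j<w_{j+1}$; hence $f$ is not convex, completing the contraposition. I expect the sufficiency direction to carry the conceptual weight, since the Abel-summation rewriting into Ky Fan norms is the crux of the argument, whereas the necessity direction is essentially bookkeeping, the only care being to arrange the background spectrum so that the two swapped coordinates remain the $j$th and $(j+1)$th largest across all three matrices.
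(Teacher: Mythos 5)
Your proof is correct, and its sufficiency half takes a genuinely different route from the paper's. For necessity the two arguments are essentially the same: like the paper, you take diagonal matrices that differ by a transposition of two adjacent diagonal entries and show that midpoint convexity fails at the slot where $w_j<w_{j+1}$ (the paper uses the specific profile $c_{ii}=i$; you use general $a>b$ with a background spectrum, and you correctly flag the only delicate point, keeping the swapped entries in slots $j$ and $j+1$ for all three matrices). For sufficiency, however, the paper verifies that $\B{x}\mapsto\sum_i w_i|\B{x}|_{\delta(i)}$ is a symmetric gauge function --- the triangle inequality coming from the Hardy--Littlewood--P\'{o}lya rearrangement inequality --- and then invokes Horn and Johnson's theorem that a symmetric gauge function of the singular values defines a matrix norm; because that theorem needs the gauge to vanish only at zero, the paper must add a separate perturbation step ($\tilde{w}_i=\epsilon\to 0$) to cover the case $w_h=0$. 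You instead write $f(\B{C})=\sum_{k=1}^{h}(w_k-w_{k+1})\|\B{C}\|_{(k)}$ by Abel summation (with $w_{h+1}:=0$) and get convexity from the Ky Fan norms, each convex as a pointwise maximum of the linear functionals $\B{C}\mapsto\langle\B{P}\B{Q}^\T,\B{C}\rangle$; both the decomposition identity and the non-negativity of the coefficients (the last one being $w_h\geq 0$) are easily checked. Your route buys a shorter, self-contained argument that handles trailing zero weights with no limiting step, and the variational formula you rely on follows from the same von Neumann trace inequality the paper already uses for Theorem \ref{thm:optimality}; the paper's route buys a slightly stronger structural conclusion, namely that for non-increasing strictly positive weights the adaptive nuclear norm is an honest matrix norm, not merely a convex function.
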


Hence, for the ANN to be a convex function, the weight must increase with the singular value, i.e., they are co-monotone. However, to use ANN for penalized estimation, the opposite is desirable, i.e., we would and shall henceforth impose the following order constraint:
\begin{equation}
0\leq w_1 \leq \cdots \leq w_h, \label{order-constraint}
\end{equation}
in order for larger singular values to receive lesser penalty to help reducing the bias and smaller singular values to receive heavier penalty to help promoting sparsity. Here is an example showing that the ANN is neither convex nor concave under constraint (\ref{order-constraint}). Consider $n=p = q = 2$, and
$$
\B{C}_1 = \begin{pmatrix} 2 & 0 \\ 0 & 1 \end{pmatrix}, \qquad\B{C}_2 = \begin{pmatrix}1&0\\0&2\end{pmatrix}.
$$
Let $w_1=1$ and $w_2 = 2$. It can be verified that $f(\B{C}_1) = f(\B{C}_2) =f(-\B{C}_2)=4$, while $f((\B{C}_1+\B{C}_2)/2) = 4.5 > (f(\B{C}_1)+f(\B{C}_2))/2$; also, $f((\B{C}_1-\B{C}_2)/2)=1.5<(f(\B{C}_1)+f(-\B{C}_2))/2$.

The non-convexity of the ANN arises from the constraint (\ref{order-constraint}) that the weight decreases with the singular value. In fact, the ANN is then \emph{no longer} a matrix norm. However, we are able to explicitly solve and characterize the global solution of the ANN criterion as follows.

\begin{theorem}\label{thm:optimality}
For any $\lambda \geq 0$, $\B{Y}\in \Re^{n\times q}$ and $0\leq w_1 \leq \cdots \leq w_{h}$ ($n=p$, $h=\min(n,q)$), a global optimal solution to the optimization problem
\begin{equation}
\min_\B{C}f(\B{C}):=\left\{\frac{1}{2}\|\B{Y}-\B{C}\|^2_F+\lambda \sum_{i=1}^{h} w_i  d_i ( \B{C})\right\}\label{eq:DC}.
\end{equation}
is given by the adaptive SVD soft-thresholding (ASVT) operator   $\hat{\B{C}}:=\mathcal{S}_{\lambda\B{w}}(\B{Y})$,
\begin{equation}
    \mathcal{S}_{\lambda\B{w}}(\B{Y})=\B{U}\mathcal{S}_{\lambda\B{w}}(\B{D})\B{V}^\T, \qquad \mathcal{S}_{\lambda\B{w}}(\B{D})=\Diag\{(d_i-\lambda w_i)_{+},i=1,...,h\},\label{ast}
\end{equation}
Further, if $\B{Y}$ has a unique SVD, $\hat{\B{C}}$ is the unique optimal solution.
\end{theorem}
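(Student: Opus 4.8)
The plan is to reduce the matrix problem to a separable scalar problem over the singular values, exploiting the unitary invariance of both the Frobenius loss and the penalty. First I would expand
\[
\tfrac{1}{2}\|\B{Y}-\B{C}\|_F^2 = \tfrac{1}{2}\|\B{Y}\|_F^2 - \tr(\B{Y}^\T\B{C}) + \tfrac{1}{2}\sum_{i=1}^h d_i(\B{C})^2,
\]
so that, after dropping the constant $\tfrac12\|\B{Y}\|_F^2$, the objective in (\ref{eq:DC}) becomes $-\tr(\B{Y}^\T\B{C}) + \sum_i\big[\tfrac12 d_i(\B{C})^2 + \lambda w_i d_i(\B{C})\big]$. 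The only term that couples the singular vectors of $\B{C}$ to those of $\B{Y}$ is the inner product $\tr(\B{Y}^\T\B{C})$. By von Neumann's trace inequality, for every $\B{C}$,
\[
\tr(\B{Y}^\T\B{C}) \le \sum_{i=1}^h d_i(\B{Y})\, d_i(\B{C}),
\]
with equality when $\B{C}$ admits an SVD sharing the singular-vector pair $(\B{U},\B{V})$ of $\B{Y}$. Hence the objective is bounded below by a quantity depending on $\B{C}$ only through its singular-value vector.

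Next I would observe that this lower bound,
\[
\Phi(\B{d}) := \sum_{i=1}^h\Big[\tfrac12 d_i^2 - \big(d_i(\B{Y}) - \lambda w_i\big) d_i\Big],
\]
is a separable function to be minimized over the ordered cone $d_1 \ge d_2 \ge \cdots \ge d_h \ge 0$, since singular values are non-increasing and non-negative. Ignoring the ordering, each scalar term $\tfrac12 d_i^2 - (d_i(\B{Y}) - \lambda w_i) d_i$ is a strictly convex quadratic in $d_i\ge 0$, minimized at $\hat d_i = (d_i(\B{Y}) - \lambda w_i)_+$, which is precisely the $i$-th diagonal entry produced by the ASVT operator $\mathcal{S}_{\lambda\B{w}}$.

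The crux is to show these candidate minimizers automatically satisfy the ordering constraint, so that the orthant-constrained solution is in fact feasible for the full cone. Because $d_i(\B{Y})$ is non-increasing in $i$ while, by the weight constraint (\ref{order-constraint}), $\lambda w_i$ is non-decreasing in $i$, the difference $d_i(\B{Y}) - \lambda w_i$ is non-increasing in $i$; since $x\mapsto x_+$ is monotone, so is $\hat d_i$. Thus $\hat d_1\ge\cdots\ge\hat d_h\ge 0$, the bound $\Phi$ is attained at $\hat{\B{d}}$, and setting $\hat{\B{C}}=\B{U}\Diag\{\hat d_i\}\B{V}^\T=\mathcal{S}_{\lambda\B{w}}(\B{Y})$ simultaneously achieves equality in von Neumann's inequality and minimizes every scalar term, making $\hat{\B{C}}$ a global minimizer. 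I expect this monotonicity step to be the main (if short) obstacle, since it is exactly where the constraint (\ref{order-constraint}) is indispensable: without it the separable solution need not be an ordered, hence admissible, singular-value vector, and the explicit formula would break down.

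Finally, for uniqueness under a unique SVD of $\B{Y}$, I would argue that each scalar quadratic has a unique minimizer by strict convexity, and that equality in the trace inequality forces $\B{C}$ to share the singular subspaces of $\B{Y}$; when the nonzero singular values of $\B{Y}$ are distinct, the corresponding singular vectors are determined up to joint sign, pinning down $\hat{\B{C}}$ on the retained directions, while the directions thresholded to zero contribute nothing.
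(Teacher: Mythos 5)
Your proposal is correct and follows essentially the same route as the paper's proof: von Neumann's trace inequality reduces the problem to a separable quadratic over the ordered singular values, the unconstrained per-coordinate minimizers $(d_i(\B{Y})-\lambda w_i)_+$ are shown to be automatically ordered because $d_i(\B{Y})$ is non-increasing while $\lambda w_i$ is non-decreasing, and uniqueness follows from the equality condition in the trace inequality together with strict convexity. The only cosmetic difference is that you phrase the reduction as a lower bound that is attained, whereas the paper phrases it as a nested inner/outer minimization; these are logically identical.
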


The fact that a closed-form global minimizer can be found for the non-convex ANN problem is not immediately clear and rather surprising. The result stems from the von Neumann's trace inequality \citep{Mirsky1976} and the properties of SVD, see the Appendix for details. Following \citet{zou2006}, the weights can be set as some power of the singular values of the data matrix, i.e., $\B{w}=\{\B{d}(\B{Y})\}^{-\gamma}$, where $\gamma\geq0$ is a prespecified constant. In this way, the order constraint (\ref{order-constraint}) is automatically satisfied. A more general way of constructing the weights and its relation to other penalty forms will be discussed in Section \ref{sec:discussion}.

\section{Adaptive nuclear norm penalization in regression}\label{sec3}

We now consider the general problem of estimating the coefficient matrix $\B{C}$, which is possibly of low-rank, in the multivariate linear regression model (\ref{model1}). Below, let $\B{P}=\B{X}(\B{X}^\T\B{X})^{-}\B{X}^\T$ be the projection matrix onto the column space of $\B{X}$ and $\tD{\B{C}}=(\B{X}^\T\B{X})^{-}\B{X}^\T\B{Y}$ the LS estimator of $\B{C}$, where $(\B{X}^\T\B{X})^{-}$ denotes the Moore-Penrose inverse of the enclosed Gram matrix. Unless otherwise noted, the singular values are always placed in non-increasing order.

\subsection{Rank and nuclear norm penalized regression methods}

The fundamental results in Theorem \ref{thm:optimality} about rank and nuclear norm penalization for matrix approximation can be readily extended to the general regression setting. Consider first the rank penalized least squares criterion \citep{bunea2011},
\begin{equation}
\frac{1}{2}\|\B{Y}-\B{X}\B{C}\|^2_F+\lambda r(\B{C}).\label{rrobj}
\end{equation}
Based on Proposition \ref{prop1}, it can be easily shown that the minimizer of (\ref{rrobj}), denoted as $\tD{\B{C}}^{(\lambda)}$, can be obtained by hard-thresholding the SVD of $\B{X}\tD{\B{C}}$. Let $\tD{\B{V}}\tD{\B{D}}^2\tD{\B{V}}^\T$ be the eigenvalue decomposition of $\B{Y}^\T\B{P}\B{Y}=(\B{X}\tD{\B{C}})^\T\B{X}\tD{\B{C}}$. The SVD of $\B{X}\tD{\B{C}}$ is then given by $\tD{\B{U}}\tD{\B{D}}\tD{\B{V}}^\T$, where $\tD{\B{U}}=\B{PY}\tD{\B{V}}\tD{\B{D}}^{-1}=\B{X}\tD{\B{C}}\tD{\B{V}}\tD{\B{D}}^{-1}$. Therefore,
\begin{align}
&\B{X}\tD{\B{C}}^{(\lambda)} = \mathcal{H}_{\sqrt{2\lambda}}(\B{X}\tD{\B{C}})= \B{X}\tD{\B{C}}\tD{\B{V}}\tD{\B{D}}^{-1}\mathcal{H}_{\sqrt{2\lambda}}(\tD{\B{D}})\tD{\B{V}}^\T,\qquad\tD{\B{C}}^{(\lambda)}= \tD{\B{C}}\tD{\B{V}}\tD{\B{D}}^{-1}\mathcal{H}_{\sqrt{2\lambda}}(\tD{\B{D}})\tD{\B{V}}^\T.\label{RSC}
\end{align}
This rank selection criterion (RSC) proposed by \citet{bunea2011} is valid in high-dimensional settings and hence extends the classical rank-constrained RRR approach \citep{reinsel1998}. In fact, the set of rank-constrained estimators which minimize $\|\B{Y}-\B{XC}\|_F^2$ subject to $r(\B{C})=r$ ($r=1,...,\min(p,q)$), spans the solution path of (\ref{rrobj}).

The nuclear-norm penalized least squares criterion (NNP) \citep{yuan2007}
\begin{equation}
\frac{1}{2}\|\B{Y}-\B{X}\B{C}\|^2_F+\lambda \|\B{C}\|_*,\label{nnobj}
\end{equation}
does not have an explicit solution in general, and can be computationally intensive for large-scale data. Extensive research has been devoted to its optimization problem, e.g., \citet{cai2008}, \citet{toh2009}, etc. One popular algorithm is to iteratively conducting a majorization step of the objective function and a minimization step using soft SVD-thresholding \citet{cai2008}.

The performance of the RSC and NNP estimators is related to the bias-variance trade-off phenomenon discussed in Section \ref{sec2}. The NNP may be more accurate than the RSC when the correlation among predictors is high or the signal to noise ratio (SNR) is low, while RSC may perform better when the correlation is moderate and the SNR is not too low; see Section \ref{sec5} for details. A drawback of NNP is that it is computationally intensive and is generally not as parsimonious as RSC in rank determination. These motivated our study of the ANN for building a continuum of estimators between the RSC and NNP estimators.

\subsection{Adaptive nuclear norm penalized regression method}

Predictive accuracy and computation efficiency are both pivotal in high dimensional regression problems. Motivated by criteria (\ref{rrobj}) and (\ref{nnobj}) and their connections with SVD, we propose to estimate $\B{C}$ by minimizing the non-convex PLS criterion
\begin{equation}
\frac{1}{2}\|\B{Y}-\B{X}\B{C}\|^2_F+\lambda \sum_{i=1}^{h} w_i d_i(\B{X}\B{C}),\label{wnnobj}
\end{equation}
%where the notation $\|\cdot\|_{\B{w}*}$ denotes the weighted nuclear-norm, i.e., $\|\B{A}\|_{\B{w}*}=\sum_iw_id_i(\B{A})$.
where $h=\min(p,q)$ and the weights $\{w_i\}$ are required to be non-negative and in non-decreasing order. In practice, a foremost task of using ANN is setting proper adaptive weights. Following \citet{zou2006}, a natural way to construct the weights is based on the LS solution:
\begin{align}
\B{w}=\{\B{d}(\B{PY})\}^{-\gamma}=\tD{\B{d}}^{-\gamma},\label{eq:weights}
\end{align}
where $\B{PY}$ is the projection of $\B{Y}$ onto the column space of $\B{X}$ and $\gamma$ a non-negative constant.

The proposed criterion (\ref{wnnobj}) is built on two main ideas. Firstly, the criterion directly focuses on prediction matrix approximation and encourages sparsity among the singular values of $\B{X}\B{C}$ rather than those of $\B{C}$, which may yield low-rank solutions for $\B{XC}$ and hence for $\B{C}$. A prominent advantage of this setup is that the problem can then be solved explicitly and efficiently. Secondly, the adaptive penalization of the singular values allows flexible bias-variance tradeoff: large singular values receive small amount of penalization to control the possible bias, and small singular values receive large amount of penalization to induce sparsity and hence reduce the rank. The following Corollary shows that this criterion leads to an explicit ANN estimator.

\begin{corollary}
A minimizer of (\ref{wnnobj}), denoted as $\hat{\B{C}}^{(\lambda\B{w})}$, is obtained via adaptively soft-thresholding the SVD of $\B{X}\tD{\B{C}}$ where $\tD{\B{C}}$ is the LS estimator of $\B{C}$, i.e.,
\begin{align}
\B{X}\hat{\B{C}}^{(\lambda \B{w})} = \mathcal{S}_{\lambda\B{w}}(\B{X}\tD{\B{C}})=\tD{\B{U}}\mathcal{S}_{\lambda\B{w}}(\tD{\B{D}})\tD{\B{V}}^\T,\qquad\hat{\B{C}}^{(\lambda \B{w})}= \tD{\B{C}}\tD{\B{V}}\tD{\B{D}}^{-1}\mathcal{S}_{\lambda\B{w}}(\tD{\B{D}})\tD{\B{V}}^\T.\label{astest}
\end{align}
where $\tD{\B{U}}\tD{\B{D}}\tD{\B{V}}^\T$ is the SVD of $\B{X}\tD{\B{C}}$ as defined in the previous section.
\end{corollary}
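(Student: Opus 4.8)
My plan is to mirror the hard-thresholding derivation that produced the RSC formula (\ref{RSC}), replacing Proposition \ref{prop1} by Theorem \ref{thm:optimality}: reduce (\ref{wnnobj}) to an instance of the matrix-approximation problem (\ref{eq:DC}) and then invert the map $\B{C}\mapsto\B{X}\B{C}$. First I would use the orthogonal projector $\B{P}=\B{X}(\B{X}^\T\B{X})^{-}\B{X}^\T$ to split the fitted residual. Because $\B{X}\B{C}$ always lies in the column space of $\B{X}$ while $\B{Y}-\B{P}\B{Y}$ is orthogonal to that space, the two pieces of $\B{Y}-\B{X}\B{C}=(\B{Y}-\B{P}\B{Y})+(\B{P}\B{Y}-\B{X}\B{C})$ are orthogonal in the Frobenius inner product, giving
\[
\|\B{Y}-\B{X}\B{C}\|^2_F=\|\B{Y}-\B{P}\B{Y}\|^2_F+\|\B{P}\B{Y}-\B{X}\B{C}\|^2_F.
\]
The first summand does not involve $\B{C}$, so minimizing (\ref{wnnobj}) is equivalent to minimizing $\frac{1}{2}\|\B{P}\B{Y}-\B{X}\B{C}\|^2_F+\lambda\sum_{i=1}^{h}w_id_i(\B{X}\B{C})$.

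Writing $\B{M}=\B{X}\B{C}$ and recalling $\B{P}\B{Y}=\B{X}\tD{\B{C}}$, this reduced objective equals $\frac{1}{2}\|\B{X}\tD{\B{C}}-\B{M}\|^2_F+\lambda\sum_{i=1}^{h}w_id_i(\B{M})$, which is exactly the functional in (\ref{eq:DC}) with data matrix $\B{X}\tD{\B{C}}$. As $\B{C}$ ranges over $\Re^{p\times q}$, however, $\B{M}$ is constrained to the matrices whose columns lie in the column space of $\B{X}$, so a priori I am solving (\ref{eq:DC}) over a proper subset rather than over all of $\Re^{n\times q}$. Dropping the constraint, Theorem \ref{thm:optimality} furnishes the global minimizer over all $\B{M}$, namely $\mathcal{S}_{\lambda\B{w}}(\B{X}\tD{\B{C}})=\tD{\B{U}}\mathcal{S}_{\lambda\B{w}}(\tD{\B{D}})\tD{\B{V}}^\T$.

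The crux of the argument is to show that this unconstrained minimizer is already feasible, for then the minimum over the larger set is attained inside the smaller one and $\mathcal{S}_{\lambda\B{w}}(\B{X}\tD{\B{C}})$ must solve the constrained problem as well. Feasibility holds because $\B{X}\tD{\B{C}}=\B{P}\B{Y}$ has all its columns in the column space of $\B{X}$, hence so does the left singular matrix $\tD{\B{U}}$; since $\tD{\B{U}}\mathcal{S}_{\lambda\B{w}}(\tD{\B{D}})\tD{\B{V}}^\T$ has columns that are linear combinations of the columns of $\tD{\B{U}}$, it too lies in the column space of $\B{X}$ and is therefore of the form $\B{X}\B{C}$ for some $\B{C}$. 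Finally I would exhibit such a $\B{C}$: substituting the SVD relation $\tD{\B{U}}=\B{X}\tD{\B{C}}\tD{\B{V}}\tD{\B{D}}^{-1}$ into $\mathcal{S}_{\lambda\B{w}}(\B{X}\tD{\B{C}})=\tD{\B{U}}\mathcal{S}_{\lambda\B{w}}(\tD{\B{D}})\tD{\B{V}}^\T$ yields $\hat{\B{C}}^{(\lambda\B{w})}=\tD{\B{C}}\tD{\B{V}}\tD{\B{D}}^{-1}\mathcal{S}_{\lambda\B{w}}(\tD{\B{D}})\tD{\B{V}}^\T$, which obeys $\B{X}\hat{\B{C}}^{(\lambda\B{w})}=\mathcal{S}_{\lambda\B{w}}(\B{X}\tD{\B{C}})$ and is thus a minimizer of (\ref{wnnobj}); when $\B{X}$ lacks full column rank it is merely one of several coefficient matrices realizing the optimal fit, matching the wording ``a minimizer'' in the statement.
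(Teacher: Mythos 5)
Your proof is correct and takes essentially the same route as the paper: the Pythagoras decomposition $\|\B{Y}-\B{X}\B{C}\|_F^2=\|\B{Y}-\B{P}\B{Y}\|_F^2+\|\B{P}\B{Y}-\B{X}\B{C}\|_F^2$ followed by an appeal to Theorem \ref{thm:optimality} with data matrix $\B{X}\tD{\B{C}}$. You are in fact somewhat more careful than the paper's one-line argument, since you explicitly note that $\B{X}\B{C}$ is constrained to the column space of $\B{X}$ and verify that the unconstrained minimizer $\mathcal{S}_{\lambda\B{w}}(\B{X}\tD{\B{C}})$ is feasible, with an explicit preimage $\hat{\B{C}}^{(\lambda\B{w})}$ — a point the paper's ``directly follows'' leaves implicit.
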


By Pythagoras' theorem, minimizing the criterion (\ref{wnnobj}) is equivalent to
minimizing $
\{1/2\|\B{X}\tD{\B{C}}-\B{X}\B{C}\|^2_F+\lambda \sum_{i} w_i d_i(\B{X}\B{C})\}$ with respect to $\B{C}$, where $\tD{\B{C}}$ is the OLS estimator. The above result then directly follows from Theorem \ref{thm:optimality}. The proposed method first projects $Y$ onto the column space of $X$, i.e., $PY=X\tilde{C}$, and the ANN estimator is then obtained as a low-rank approximation of $PY$ via soft SVD-thresolding; the thresholding level is adaptive and can be data-driven: the smaller a singular value, the larger its thresholding level. Therefore, the estimated rank of an ANN estimator corresponds to the smallest singular value of $PY$ that exceeds its thresholding level, i.e., $\hat{r}=\max\{r: d_r(PY)> \lambda w_r\}$. For the choice of the weights (\ref{eq:weights}), i.e., the estimated rank is given by
\begin{align}
\hat{r}=\max\{r: d_r(\B{PY})> \lambda^{\frac{1}{\gamma+1}}\},\label{est:rank}
\end{align}
and the plausible range of the tuning parameter is $\lambda \in [0,\tD{d}_1^{\gamma+1}]$, with $\lambda=0$ corresponding to the LS solution and $\lambda=\tD{d}_1^{\gamma+1}$ the null solution.

The ANN estimator and the RSC estimator only differ in their singular values but the difference can be consequential. While the solution path of RSC is discontinuous and the number of possible solutions equals to the maximum rank, the ANN criterion offers more flexibility in that the resulting solution path is continuous and guided by the data-driven weights. The ANN and RSC are based on the same one-time SVD operation and thus they have similar computation complexity and can both be easily implemented and efficiently computed, in contrast to the computationally intensive NNP method.

%It is also interesting to examine the connections of NNP and RSC to ANN by considering the choice of the power parameter $\gamma$ in the adaptive weights (\ref{eq:weights}). When $\gamma=0$, this is the case of no adaptive weights, and the only difference between ANN and NNP is that the former puts the penalty on $XC$ while the latter puts it on $C$. They both induce reduced-rank estimation of $C$ and are completely equivalent in the case of orthogonal design ($X^\T X = I$). On the other hand, when $\gamma$ gets bigger, the ANN essentially converges to the RSC. To see this clearly, recall that a singular value estimated by ANN via adaptive soft-thresholding takes the form $\hat{d} = \{\tilde{d} - (\tilde{d}/a)^{-\gamma}\}_+$, where the tuning paramter is reparameterized as $\lambda=a^\gamma$ and $\tilde{d}$ is the LS estimate of $d$, see (\ref{astest}). As $\gamma \rightarrow \infty$, this rule converges exactly to the hard-thresholding rule $\tilde{d}I(\tilde{d}>a)$, which is used in computing RSC solution. In this sense, the ANN method provides a bridge and a continuum of bias-variance tradeoff between the NNP and RSC. Our numerical results in Section \ref{sec5} suggest that the choice of $\gamma=2$ reaches a good balance.

For any fixed $\lambda>0$, the ANN estimator $\hat{\B{C}}^{(\lambda \B{w})}$ can be computed by (\ref{astest}). (The same SVD operation can be used to compute the RSC solutions.) To choose an optimal $\lambda$ and hence an optimal ANN solution, we use the $K$-fold cross validation (CV) method, based on the predictive performance of the models \citep{stone1974}. For the numerical studies reported below, we first compute the solutions over a grid of 100 $\lambda$ values equally spaced on the log scale and select the best $\lambda$ value by CV;  subsequently we refine the selection process around the chosen $\lambda$ value with another finer grid of $100$ $\lambda$ values.

\section{Rank consistency and error bounds}\label{sec4}

We study the rank estimation and prediction properties of the proposed ANN estimator. Our theoretical analysis is built on the framework developed by \citet{bunea2011}, as RSC and ANN are closely connected. We mainly focus on the random weights constructed in (\ref{eq:weights}), in line with the adaptive Lasso method \citep{zou2006} developed for the univariate (multiple) regression. Similar results are obtained for any prespecified sequence of weights satisfying certain order restriction and boundedness requirements. All the proofs are given in the Appendix.

The rank of the coefficient matrix $\B{C}$ can be viewed as the number of effective combination of predictors linked to the responses. Rank determination is always a foremost task of reduced-rank estimation. The quality of rank estimator, defined by (\ref{est:rank}), clearly depends on the signal to noise ratio. Following \citet{bunea2011}, we shall use the smallest non-zero singular value of $\B{XC}$, i.e., $d_{r^*}(\B{XC})$, to measure the signal strength, and use the largest singular value of the projected noise matrix $\B{PE}$, i.e., $d_1(\B{PE})$, to measure that of the noise. Intuitively, if $d_1(\B{PE})$ is much larger than the size of the signal, some signal could be deeply masked by the noise and lost during the thresholding procedure; as such, $\hat{r}$ may be much smaller than the true rank. The lemma below characterizes the ``limit'' or the true target of $\hat{r}$ and its relationship with the noise level.

\begin{lemma}\label{lemma:rank1}
Suppose that there exists an index $s\leq r^*$ such that
$d_s(\B{XC})> (1+\delta)\lambda^{1/(\gamma+1)}$ and $d_{s+1}(\B{XC}) \leq (1-\delta)\lambda^{1/(\gamma+1)}$ for some $\delta \in (0,1]$. Then
$\M{P}(\hat{r}=s)\geq 1- \M{P}(d_1(\B{PE})\geq \delta\lambda^{1/(\gamma+1)})$, where $\B{P}$ is the projection matrix onto the column space of $\B{X}$, $\B{E}$ is the error matrix in model (\ref{model1}), and $\gamma$ is the power parameter in the adaptive weights (\ref{eq:weights}).
\end{lemma}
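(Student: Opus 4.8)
The plan is to translate the event $\{\hat r=s\}$ into conditions on the singular values of $\B{PY}$ and then control their deviation from those of the signal $\B{XC}$ by a deterministic perturbation argument, leaving only a trivial probability step at the end. First I would recall from (\ref{est:rank}) that $\hat r=\max\{r:d_r(\B{PY})>\lambda^{1/(\gamma+1)}\}$. Writing $\tau:=\lambda^{1/(\gamma+1)}$ and using that the singular values are arranged in non-increasing order, the event $\{\hat r=s\}$ holds exactly when $d_s(\B{PY})>\tau$ and $d_{s+1}(\B{PY})\le\tau$. Hence it suffices to exhibit a high-probability event on the noise that forces both of these inequalities simultaneously.

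The key algebraic simplification is $\B{PY}=\B{P}(\B{XC}+\B{E})=\B{XC}+\B{PE}$, since $\B{P}$ is the orthogonal projection onto the column space of $\B{X}$ and therefore fixes $\B{XC}$. This exhibits the projected data as the signal plus a projected-noise perturbation. The main tool is then Weyl's perturbation inequality for singular values (a standard consequence of the same Mirsky/von Neumann theory already invoked for Theorem \ref{thm:optimality}): for every index $k$, $|d_k(\B{XC}+\B{PE})-d_k(\B{XC})|\le d_1(\B{PE})$, because singular values are $1$-Lipschitz in the operator norm $d_1(\cdot)$. Applying this with $k=s$ and $k=s+1$ yields the two-sided bounds $d_s(\B{PY})\ge d_s(\B{XC})-d_1(\B{PE})$ and $d_{s+1}(\B{PY})\le d_{s+1}(\B{XC})+d_1(\B{PE})$.

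I would then work on the event $\mathcal{G}:=\{d_1(\B{PE})<\delta\tau\}$. Combining the separation hypothesis $d_s(\B{XC})>(1+\delta)\tau$ with the first bound gives, on $\mathcal{G}$, $d_s(\B{PY})>(1+\delta)\tau-\delta\tau=\tau$; similarly $d_{s+1}(\B{XC})\le(1-\delta)\tau$ and the second bound give $d_{s+1}(\B{PY})<(1-\delta)\tau+\delta\tau=\tau$. Thus $\mathcal{G}\subseteq\{d_s(\B{PY})>\tau,\ d_{s+1}(\B{PY})\le\tau\}=\{\hat r=s\}$, whence $\M{P}(\hat r=s)\ge\M{P}(\mathcal{G})=1-\M{P}(d_1(\B{PE})\ge\delta\tau)$, which is the asserted bound.

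There is no serious obstacle here: the heart of the argument is a deterministic sandwiching of $d_s(\B{PY})$ and $d_{s+1}(\B{PY})$, and the probability statement follows by passing to the complement of $\mathcal{G}$. The only point requiring care is the bookkeeping of strict versus non-strict inequalities, so that $\mathcal{G}$ genuinely produces $d_s(\B{PY})>\tau$ (strict, from the strict separation at index $s$) and $d_{s+1}(\B{PY})\le\tau$; this is precisely why the hypotheses carry the $(1\pm\delta)$ margins with $\delta\in(0,1]$. The condition $s\le r^*$ simply ensures that $d_s(\B{XC})$ is a genuine nonzero singular value, consistent with the lower separation bound, and the same Weyl inequality would let one extend the argument to the deterministic-weight case with only cosmetic changes.
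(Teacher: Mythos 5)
Your proposal is correct and follows essentially the same route as the paper: both arguments rest on the characterization $\{\hat r=s\}=\{d_s(\B{PY})>\lambda^{1/(\gamma+1)},\ d_{s+1}(\B{PY})\le\lambda^{1/(\gamma+1)}\}$, the decomposition $\B{PY}=\B{XC}+\B{PE}$, and Weyl's perturbation inequality for singular values, combined with the $(1\pm\delta)$ separation hypotheses. The only difference is presentational—the paper bounds $\M{P}(\hat r\neq s)$ by showing that event forces $d_1(\B{PE})\ge\delta\lambda^{1/(\gamma+1)}$, while you argue the contrapositive by showing the good noise event is contained in $\{\hat r=s\}$—which is logically the same argument.
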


This result establishes the relationship between the estimated rank, the signal level, the noise level and the adaptive weights. To achieve consistent rank estimation, we consider the following assumptions:

\begin{assumption}
    The error matrix $\B{E}$ has independent $N(0, \sigma^2)$ entries.
\end{assumption}
\begin{assumption}
    For any $\theta>0$, assume $\lambda=\{(1+\theta)\sigma(\sqrt{r_x}+\sqrt{q})/\delta\}^{\gamma+1}$ with $\delta$ defined in Lemma \ref{lemma:rank1}, and assume $d_{r^*}(\B{XC})>2\lambda^{1/(\gamma+1)}$.
\end{assumption}

Assumption 1 is about the error structure, which ensures that the noise level $d_1(\B{PE})$ is of order $\sqrt{r_x}+\sqrt{q}$, see Lemma \ref{lemma:rank2} \citep{bunea2011}. Assumption 2 concerns the signal strength relative to the noise level and the appropriate rate of the tuning parameter. %Similar to the adaptive Lasso method \citep{zou2006}, Assumption 2 involves the parameter $\gamma$ in the adaptive weights.

\begin{theorem}\label{th:rank}
Suppose Assumptions 1--2 hold. Let $r^*=r(\B{C})$ be the true rank, $r_x=r(\B{X})$ be the rank of $\B{X}$, and $\hat{\B{r}}$ be the estimated rank defined by  (\ref{est:rank}). Then $\M{P}(\hat{r}=r^*) \rightarrow 1$ as $r_x+q \rightarrow \infty$.
\end{theorem}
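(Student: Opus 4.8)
The plan is to apply Lemma \ref{lemma:rank1} at the specific index $s=r^*$ and then show that the residual failure probability it leaves behind vanishes under Assumptions 1--2.

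First I would verify that the separation hypotheses of Lemma \ref{lemma:rank1} hold with $s=r^*$. Because $r(\B{XC})\le r(\B{C})=r^*$ while Assumption 2 forces $d_{r^*}(\B{XC})>2\lambda^{1/(\gamma+1)}>0$, the $r^*$th singular value of $\B{XC}$ is strictly positive, so in fact $r(\B{XC})=r^*$ and hence $d_{r^*+1}(\B{XC})=0$ (the upper condition is then vacuous, or trivially satisfied for every $\delta\in(0,1]$). For the lower condition, since $\delta\le 1$ we have $(1+\delta)\lambda^{1/(\gamma+1)}\le 2\lambda^{1/(\gamma+1)}<d_{r^*}(\B{XC})$, again by Assumption 2. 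Thus Lemma \ref{lemma:rank1} applies and gives
$$\M{P}(\hat{r}=r^*)\ge 1-\M{P}\big(d_1(\B{PE})\ge \delta\lambda^{1/(\gamma+1)}\big).$$

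Next I would substitute the prescribed rate of the tuning parameter. Assumption 2 sets $\lambda^{1/(\gamma+1)}=(1+\theta)\sigma(\sqrt{r_x}+\sqrt{q})/\delta$, so the threshold simplifies to $\delta\lambda^{1/(\gamma+1)}=(1+\theta)\sigma(\sqrt{r_x}+\sqrt{q})$, conveniently eliminating $\delta$. It then remains to show that $\M{P}\big(d_1(\B{PE})\ge(1+\theta)\sigma(\sqrt{r_x}+\sqrt{q})\big)\to 0$ as $r_x+q\to\infty$. This is where the probabilistic content enters: under Assumption 1 the projected noise $\B{PE}$, supported on the $r_x$-dimensional column space of $\B{X}$, behaves like an $r_x\times q$ Gaussian ensemble whose largest singular value concentrates near $\sigma(\sqrt{r_x}+\sqrt{q})$. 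Invoking Lemma \ref{lemma:rank2} yields a sub-Gaussian deviation bound of order $\exp\{-c\,\theta^2(\sqrt{r_x}+\sqrt{q})^2\}$, which tends to $0$ for every fixed $\theta>0$; combined with the displayed inequality this gives $\M{P}(\hat{r}=r^*)\to 1$.

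I expect the tail bound on $d_1(\B{PE})$ to be the only substantive step: everything else is the algebraic check that $s=r^*$ meets the separation conditions of Lemma \ref{lemma:rank1}. Controlling the operator norm of the projected Gaussian noise at the sharp scale $\sigma(\sqrt{r_x}+\sqrt{q})$ with exponentially small failure probability is supplied by Lemma \ref{lemma:rank2} (borrowed from \citet{bunea2011}); the one point requiring care is that the factor $(1+\theta)$ leaves enough slack above the concentration level for the Gaussian tail to vanish, which it does for any $\theta>0$.
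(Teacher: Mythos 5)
Your proposal is correct and follows essentially the same route as the paper's proof: verify that the separation hypotheses of Lemma \ref{lemma:rank1} hold at $s=r^*$ (using $d_{r^*+1}(\B{XC})=0$ and Assumption 2), substitute the prescribed $\lambda$ so the threshold becomes $(1+\theta)\sigma(\sqrt{r_x}+\sqrt{q})$, and kill the tail probability via the Gaussian concentration in Lemma \ref{lemma:rank2}, yielding the bound $\exp(-\theta^2(r_x+q)/2)\to 0$. Your verification of the lemma's hypotheses is in fact slightly more explicit than the paper's.
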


Theorem \ref{th:rank} shows that the ANN estimator is able to identify the correct rank with probability tending to 1 as $r_x+q$ goes to infinity. Similar to \citet{bunea2011}, the consistency results can be extended to the case of sub-Gaussian errors and can also be easily adapted to the case when $r_x + q$ is bounded and the sample size $n$ goes to infinity. Therefore, the rank consistency of the proposed ANN estimator is valid for both classical and high-dimensional asymptotic regimes.

Our main results about the prediction performance of the proposed ANN estimator are presented in Theorem \ref{bound:th2} below. For simplify, we write $\hat{\B{C}}$ for $\hat{\B{C}}^{(\lambda\B{w})}$.

\begin{theorem}\label{bound:th2}
Suppose Assumptions 1--2 hold. Let $c=d_{1}(\B{XC})/d_{r^*}(\B{XC})\geq 1$. Then
\begin{equation}
\|\B{X}\hat{\B{C}}-\B{XC}\|_F^2 \leq \frac{1+a}{1-a}\|\B{X}\B{B}-\B{XC}\|_F^2
+\frac{1}{a(1-a)}\left\{\sqrt{2}\delta+2(2-\delta)^{-\gamma}-(2c+\delta)^{-\gamma}\right\}^2\lambda^{\frac{2}{\gamma+1}}r^*,\notag
\end{equation}
with probability greater than $1-\exp(-\theta^2(r_x+q)/2)$, for any $0<a<1$ and any $p\times q$ matrix $\B{B}$ with $r(\B{B})\leq r^*$. Moreover, taking $\B{B}=\B{C}$ and $a=1/2$ yields
\begin{align*}
\|\B{X}\hat{\B{C}}-\B{XC}\|_F^2 & \leq 4\{\sqrt{2}\delta+2(2-\delta)^{-\gamma}-(2c+\delta)^{-\gamma}\}^2\lambda^{\frac{2}{\gamma+1}}r^*\\
& = 4\{\sqrt{2}+2(2-\delta)^{-\gamma}/\delta-(2c+\delta)^{-\gamma}/\delta\}^2(1+\theta)^2\sigma^2(\sqrt{r_x}+\sqrt{q})^2r^*
\end{align*}
with probability greater than $1-\exp(-\theta^2(r_x+q)/2)$.
\end{theorem}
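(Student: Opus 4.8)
The plan is to reduce to the projected matrix-approximation setting and run a basic-inequality (oracle) argument on a high-probability event. Write $\B{M}=\B{X}\B{C}$, $\hat{\B{M}}=\B{X}\hat{\B{C}}$ and $\B{M}_B=\B{X}\B{B}$, and set $f_{\B{w}}(\cdot)=\sum_i w_i d_i(\cdot)$. By Pythagoras and the Corollary, $\hat{\B{M}}=\mathcal{S}_{\lambda\B{w}}(\B{P}\B{Y})$ is the global minimizer of $\tfrac12\|\B{P}\B{Y}-\B{M}'\|_F^2+\lambda f_{\B{w}}(\B{M}')$ over $\B{M}'$ in the column space of $\B{X}$, so comparing its value at $\hat{\B{M}}$ and at $\B{M}_B$ gives $\tfrac12\|\B{P}\B{Y}-\hat{\B{M}}\|_F^2+\lambda f_{\B{w}}(\hat{\B{M}})\le \tfrac12\|\B{P}\B{Y}-\B{M}_B\|_F^2+\lambda f_{\B{w}}(\B{M}_B)$. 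Substituting $\B{P}\B{Y}=\B{M}+\B{P}\B{E}$ and cancelling the common $\|\B{P}\B{E}\|_F^2$ yields $\tfrac12\|\hat{\B{M}}-\B{M}\|_F^2 \le \tfrac12\|\B{M}_B-\B{M}\|_F^2 + \langle \B{P}\B{E}, \hat{\B{M}}-\B{M}_B\rangle + \lambda[f_{\B{w}}(\B{M}_B)-f_{\B{w}}(\hat{\B{M}})]$. All the work is in the last two terms, which I would bound on the event $\mathcal{A}=\{d_1(\B{P}\B{E})<\delta\lambda^{1/(\gamma+1)}\}$; by Lemma \ref{lemma:rank2} and Assumption 2 this has probability at least $1-\exp(-\theta^2(r_x+q)/2)$. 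On $\mathcal{A}$, Lemma \ref{lemma:rank1} with Assumption 2 forces $\hat{r}=r^*$, so $\hat{\B{M}}$, $\B{M}$ and $\B{M}_B$ all have rank at most $r^*$ and every pairwise difference has rank at most $2r^*$.

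For the stochastic term I would use trace-duality $\langle \B{P}\B{E}, \B{A}\rangle \le d_1(\B{P}\B{E})\,\|\B{A}\|_*$ with $\B{A}=\hat{\B{M}}-\B{M}_B$; since $\B{A}$ has rank at most $2r^*$, Cauchy--Schwarz gives $\|\B{A}\|_*\le \sqrt{2r^*}\,\|\B{A}\|_F$, so on $\mathcal{A}$ the cross term is at most $\sqrt{2}\,\delta\,\lambda^{1/(\gamma+1)}\sqrt{r^*}\,\|\hat{\B{M}}-\B{M}_B\|_F$. This is exactly where the $\sqrt{2}\delta$ summand in the bracket originates.

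The adaptive-penalty difference is the delicate step and I expect it to be the main obstacle, since the weights $w_i=\tilde{d}_i^{-\gamma}$ are random and the singular subspaces of $\hat{\B{M}}$, $\B{M}$ and $\B{M}_B$ are not aligned. I would first invoke Weyl's inequality, $|\tilde{d}_i-d_i(\B{X}\B{C})|\le d_1(\B{P}\B{E})$, to confine the perturbed singular values to $d_i(\B{X}\B{C})\pm\delta\lambda^{1/(\gamma+1)}$. Combined with $d_{r^*}(\B{X}\B{C})>2\lambda^{1/(\gamma+1)}$ and $d_1(\B{X}\B{C})=c\,d_{r^*}(\B{X}\B{C})$, this gives two-sided control of the effective thresholds $\lambda w_i$, of the schematic form $(2-\delta)\lambda^{1/(\gamma+1)}<\tilde{d}_i$ together with an upper envelope scaling like $(2c+\delta)\lambda^{1/(\gamma+1)}$; after multiplying by $\lambda$ these produce the $(2-\delta)^{-\gamma}$ and $(2c+\delta)^{-\gamma}$ factors. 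I would then turn $f_{\B{w}}(\B{M}_B)-f_{\B{w}}(\hat{\B{M}})$ into a Frobenius bound through Mirsky's perturbation inequality, $\sqrt{\sum_i[d_i(\B{M}_B)-d_i(\hat{\B{M}})]^2}\le\|\hat{\B{M}}-\B{M}_B\|_F$, followed by a weighted Cauchy--Schwarz step; carrying the two-sided threshold bounds across the $r^*$ retained components is what assembles the combination $2(2-\delta)^{-\gamma}-(2c+\delta)^{-\gamma}$.

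Collecting the two bounds, on $\mathcal{A}$ one obtains $\tfrac12\|\hat{\B{M}}-\B{M}\|_F^2\le\tfrac12\|\B{M}_B-\B{M}\|_F^2 + K\lambda^{1/(\gamma+1)}\sqrt{r^*}\,\|\hat{\B{M}}-\B{M}_B\|_F$ with $K=\sqrt{2}\delta+2(2-\delta)^{-\gamma}-(2c+\delta)^{-\gamma}$. Writing $\|\hat{\B{M}}-\B{M}_B\|_F\le\|\hat{\B{M}}-\B{M}\|_F+\|\B{M}_B-\B{M}\|_F$ and applying Young's inequality $uv\le \tfrac{a}{2}u^2+\tfrac{1}{2a}v^2$ to each resulting product, with free parameter $a\in(0,1)$, absorbs the $\|\hat{\B{M}}-\B{M}\|_F^2$ contribution into the left-hand side; rearranging produces the factor $\tfrac{1+a}{1-a}$ on $\|\B{X}\B{B}-\B{X}\B{C}\|_F^2$ and $\tfrac{1}{a(1-a)}$ on $K^2\lambda^{2/(\gamma+1)}r^*$, giving the first display. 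The second display is the specialization $\B{B}=\B{C}$ (annihilating the approximation term) with $a=1/2$, after which substituting $\lambda^{1/(\gamma+1)}=(1+\theta)\sigma(\sqrt{r_x}+\sqrt{q})/\delta$ from Assumption 2 yields the stated closed form, still on the event $\mathcal{A}$ of probability at least $1-\exp(-\theta^2(r_x+q)/2)$.
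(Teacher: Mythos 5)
Your skeleton matches the paper's proof: the same basic inequality (your projected formulation is equivalent to the paper's expansion, since $\langle\B{E},\B{XA}\rangle_F=\langle\B{PE},\B{XA}\rangle_F$ for any $\B{A}$), the same event $\{d_1(\B{PE})<\delta\lambda^{1/(\gamma+1)}\}$ on which Theorem \ref{th:rank} forces $\hat{r}=r^*$ and Lemma \ref{lemma:rank2} controls the probability, the same trace-duality/low-rank/Cauchy--Schwarz bound $\sqrt{2}\delta\lambda^{1/(\gamma+1)}\sqrt{r^*}\|\B{X}\hat{\B{C}}-\B{XB}\|_F$ for the noise term, and a Young-type absorption at the end. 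Where you genuinely depart is the adaptive-penalty difference. The paper writes $w_i=w_{\hat{r}}-(w_{\hat{r}}-w_i)$ and splits the difference into a Ky Fan part $p_1(\cdot)=\sum_{i\le\hat{r}}d_i(\cdot)$ and a part $p_2(\cdot)=\sum_{i\le\hat{r}}(w_{\hat{r}}-w_i)d_i(\cdot)$ whose weights are non-increasing, invokes the triangle inequality for both (this is where Theorem \ref{thm:convexity} gets reused), arrives at the coefficient $2w_{\hat{r}}-w_1$, and only then applies Weyl to $d_{\hat{r}}(\B{PY})$ and $d_1(\B{PY})$. Your alternative --- Mirsky's perturbation inequality plus weighted Cauchy--Schwarz, giving $\sum_i w_i\{d_i(\B{XB})-d_i(\B{X}\hat{\B{C}})\}\le(\sum_{i\le r^*}w_i^2)^{1/2}\|\B{X}\hat{\B{C}}-\B{XB}\|_F\le\sqrt{r^*}\,w_{r^*}\|\B{X}\hat{\B{C}}-\B{XB}\|_F$ --- is valid, needs Weyl only for $d_{r^*}(\B{PY})$, and produces the coefficient $(2-\delta)^{-\gamma}$, which is no larger than the theorem's $2(2-\delta)^{-\gamma}-(2c+\delta)^{-\gamma}$; on this step your route is, if anything, cleaner and sharper.

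Two gaps remain, one cosmetic and one substantive. The cosmetic one: your narrative for how the exact bracket $2(2-\delta)^{-\gamma}-(2c+\delta)^{-\gamma}$ emerges is not a derivation and cannot become one, because Assumption 2 bounds $d_{r^*}(\B{XC})$ (hence $d_1(\B{XC})=c\,d_{r^*}(\B{XC})$) from \emph{below} only, so there is no ``upper envelope'' of size $(2c+\delta)\lambda^{1/(\gamma+1)}$ for $\tilde{d}_1$; that bracket is an artifact of the paper's $2w_{\hat{r}}-w_1$ decomposition, not of any two-sided spectral control, and your Mirsky bound never needs it. The substantive one: in the final assembly you apply Young's inequality to \emph{both} products created by the triangle inequality, which yields the additive term $\frac{2}{a(1-a)}K'^2\lambda^{2/(\gamma+1)}r^*$ rather than the stated $\frac{1}{a(1-a)}K^2\lambda^{2/(\gamma+1)}r^*$. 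At $\gamma=0$ your $K'=\sqrt{2}\delta+1$ coincides with the theorem's $K$, so the extra factor $2$ is not compensated by your smaller bracket, and the first display (arbitrary $\B{B}$ with $r(\B{B})\le r^*$, arbitrary $a$) is not established by your argument. The paper preserves $\frac{1}{a(1-a)}$ by applying Young once to the single product $\{\cdots\}\|\B{X}\hat{\B{C}}-\B{XB}\|_F$ and then absorbing via $\|\B{X}\hat{\B{C}}-\B{XB}\|_F^2\le\|\B{X}\hat{\B{C}}-\B{XC}\|_F^2+\|\B{XB}-\B{XC}\|_F^2$ (the display after (\ref{th2:eq1}); note this step is itself not the triangle inequality and holds trivially only when $\B{B}=\B{C}$). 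If you replace your double-Young step by that single-Young absorption, your proof closes and in fact gives the bound with the better constant $\sqrt{2}\delta+(2-\delta)^{-\gamma}$; your derivation of the second display is already fine, since with $\B{B}=\B{C}$ the term $\|\B{XB}-\B{XC}\|_F$ vanishes and only one Young application is needed.
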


The above established bound shows that the prediction error is bounded by $d_1^2(\B{PE}) r^*$ up to some constant with probability $1-\exp(-\theta^2(r_x+q)/2)$, i.e., the smaller the error size or the true rank, the smaller the prediction error. The bound is valid for any $\B{X}$ and $\B{C}$. The estimation error bound of $\hat{\B{C}}$ can also be readily derived from Theorem \ref{bound:th2}, e.g., if $d_{r_x}(\B{X})\geq \rho>0$ for some constant $\rho$, then under Assumptions 1--2,
$\|\hat{\B{C}}-\B{C}\|_F^2  \leq 4\rho^{-2}\{\sqrt{2}\delta+2(2-\delta)^{-\gamma}-(2c+\delta)^{-\gamma}\}^2\lambda^{\frac{2}{\gamma+1}}r^*$.

The rank consistency and prediction bound can be similarly established for any prespecified sequence of weights satisfying \begin{align}
0 \leq w_1\leq\cdots \leq w_{\bar{r}}, w_{r^*}\leq M \leq w_{r^*+1},\label{fixedweights1}
\end{align}
where $0 < M <\infty$, $w_{r^*+1}>0$ and $\bar{r}=\min(r_x,q)$; the index $s$ in Lemma 1, the requirements on the tuning sequence and the signal level in Assumption 2 shall be modified accordingly,
\begin{align}
& d_s(\B{XC})> (1+\delta)\lambda w_s, d_{s+1}(\B{XC}) < (1-\delta)\lambda w_{s+1} \M{ for some } \delta \in (0,1],\label{fixedweights2}\\
& \lambda=(1+\theta)\sigma(\sqrt{r_x}+\sqrt{q})(\delta M)^{-1}, d_{r^*}(\B{XC})>2\lambda M.\label{fixedweights3}
\end{align}
%We thus have the following corollary.
\begin{corollary}\label{corollary:2}
Suppose that Assumption 1 and (\ref{fixedweights1})--(\ref{fixedweights3}) hold. Then
\begin{align*}
&(1)\,\M{P}(\hat{r}=r^*) \rightarrow 1 \M{ as } r_x+q \rightarrow \infty;\\
&(2)\, \|\B{X}\hat{\B{C}}-\B{XC}\|_F^2 \leq \frac{1+a}{1-a}\|\B{X}\B{B}-\B{XC}\|_F^2
+\frac{1}{a(1-a)}\{(2+\sqrt{2}\delta)M-w_1\}^2\lambda^2r^* \\
&\qquad\M{ for any } 0 < a < 1 \M{ and } B \M{ with } r(B)\leq r^*;\\
&(3) \, \|\B{X}\hat{\B{C}}-\B{XC}\|_F^2 \leq 4(\sqrt{2}+2/\delta-w_1/(M\delta))^2(1+\theta)^2\sigma^2(\sqrt{r_x}+\sqrt{q})^2r^*\\
&\qquad\M{ with probability greater than } 1-\exp(-\theta^2(r_x+q)/2).
\end{align*}
\end{corollary}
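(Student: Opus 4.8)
The plan is to retrace the proofs of Theorem~\ref{th:rank} and Theorem~\ref{bound:th2}, replacing the random weights $\tD{\B{d}}^{-\gamma}$ by the deterministic sequence $\{w_i\}$ and letting $\lambda M$ play the role previously played by $\lambda^{1/(\gamma+1)}$. The only properties of the weights that those arguments actually exploit are the order constraint $w_1\le\cdots\le w_{\bar r}$, the two-sided bound $w_{r^*}\le M\le w_{r^*+1}$, and the calibration (\ref{fixedweights3}); this is precisely why (\ref{fixedweights1})--(\ref{fixedweights3}) are the right hypotheses. Throughout I would work on the single ``good event'' $\mathcal{G}=\{d_1(\B{PE})<\delta\lambda M\}$. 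Since $\B{PY}=\B{XC}+\B{PE}$ and (\ref{fixedweights3}) gives $\delta\lambda M=(1+\theta)\sigma(\sqrt{r_x}+\sqrt q)$, Assumption~1 together with the Gaussian bound of Lemma~\ref{lemma:rank2} yields $\M{P}(\mathcal{G})\ge 1-\exp(-\theta^2(r_x+q)/2)$, which drives all three conclusions.

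For part~(1) I would establish the fixed-weight analogue of Lemma~\ref{lemma:rank1} through Weyl's perturbation inequality $|d_i(\B{PY})-d_i(\B{XC})|\le d_1(\B{PE})$. On $\mathcal{G}$ the signal condition $d_{r^*}(\B{XC})>2\lambda M$ from (\ref{fixedweights3}) gives $d_{r^*}(\B{PY})\ge d_{r^*}(\B{XC})-d_1(\B{PE})>(2-\delta)\lambda M\ge\lambda w_{r^*}$, so the $r^*$-th singular value of $\B{PY}$ survives its threshold; meanwhile $d_{r^*+1}(\B{XC})=0$ and $w_{r^*+1}\ge M$ give $d_{r^*+1}(\B{PY})\le d_1(\B{PE})<\delta\lambda M\le\lambda w_{r^*+1}$, so the $(r^*+1)$-th does not. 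By the definition (\ref{est:rank}) of $\hat r$ this forces $\hat r=r^*$ on $\mathcal{G}$, and part~(1) follows from $\M{P}(\mathcal{G})\to1$ as $r_x+q\to\infty$.

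For parts~(2)--(3) I would start from the optimality of $\hat{\B{C}}$. By Pythagoras' theorem the criterion (\ref{wnnobj}) is equivalent to minimizing $\tfrac12\|\B{PY}-\B{XC}\|_F^2+\lambda\sum_i w_i d_i(\B{XC})$, so Theorem~\ref{thm:optimality} identifies $\B{X}\hat{\B{C}}=\mathcal{S}_{\lambda\B{w}}(\B{PY})$ as its global minimizer and yields, for any $\B{B}$ with $r(\B{B})\le r^*$, the basic inequality $\tfrac12\|\B{X}\hat{\B{C}}-\B{XC}\|_F^2\le\tfrac12\|\B{XB}-\B{XC}\|_F^2+\langle\Delta,\B{PE}\rangle+\lambda\sum_i w_i\{d_i(\B{XB})-d_i(\B{X}\hat{\B{C}})\}$, where $\Delta=\B{X}\hat{\B{C}}-\B{XB}$. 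On $\mathcal{G}$ part~(1) gives $\hat r=r^*$, hence $r(\Delta)\le 2r^*$; von Neumann's trace inequality \citep{Mirsky1976} then gives $\langle\Delta,\B{PE}\rangle\le d_1(\B{PE})\|\Delta\|_*\le\sqrt2\,\delta\lambda M\sqrt{r^*}\|\Delta\|_F$. The weighted singular-value difference I would bound using $w_i\le M$ on the support of $\B{XB}$ and $w_i\ge w_1$ on that of $\B{X}\hat{\B{C}}$, together with a Mirsky/triangle estimate and $\|\Delta\|_*\le\sqrt{2r^*}\|\Delta\|_F$, to reach a bound of the form $\{(2+\sqrt2\delta)M-w_1\}\lambda\sqrt{r^*}\|\Delta\|_F$. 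Splitting $\|\Delta\|_F\le\|\B{X}\hat{\B{C}}-\B{XC}\|_F+\|\B{XB}-\B{XC}\|_F$ and applying Young's inequality with parameter $a$ absorbs the $\|\B{X}\hat{\B{C}}-\B{XC}\|_F$ term into the left-hand side, producing the $\tfrac{1+a}{1-a}$ and $\tfrac{1}{a(1-a)}$ structure of~(2); finally, taking $\B{B}=\B{C}$, $a=1/2$ and substituting $\lambda M=(1+\theta)\sigma(\sqrt{r_x}+\sqrt q)/\delta$ gives~(3).

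The hard part will be the bookkeeping of the combined cross-term and weighted-penalty contribution that produces the sharp constant $(2+\sqrt2\delta)M-w_1$ rather than a cruder multiple of $M$: one must track how the upper bound $M\ge w_{r^*}$ enters the penalty on the low-rank approximant $\B{XB}$ and how the smallest weight $w_1$ enters as a genuine variance-reducing negative term from the penalty on $\B{X}\hat{\B{C}}$, while cancelling the individually uncontrolled nuclear-norm quantities $\|\B{XB}\|_*$ and $\|\B{X}\hat{\B{C}}\|_*$ via the triangle inequality. This is exactly the step in which the proof of Theorem~\ref{bound:th2} specializes $(2-\delta)^{-\gamma}\mapsto M$ and $(2c+\delta)^{-\gamma}\mapsto w_1$; once that estimate is in hand, the remainder is a routine repetition of the arguments already established for the random-weight case.
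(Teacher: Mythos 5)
Your plan is correct in outline and is essentially the paper's own proof: work on the event $\mathcal{G}=\{d_1(\B{PE})<\delta\lambda M\}$, whose probability is at least $1-\exp(-\theta^2(r_x+q)/2)$ by Lemma \ref{lemma:rank2} and (\ref{fixedweights3}); show $\hat{r}=r^*$ on $\mathcal{G}$ via Weyl's inequality exactly as you do for part (1); then rerun the argument of Theorem \ref{bound:th2} with $\lambda^{1/(\gamma+1)}$ replaced by $\lambda M$ and with $2(2-\delta)^{-\gamma}\mapsto 2M$, $(2c+\delta)^{-\gamma}\mapsto w_1$.

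Two steps, however, would not go through as literally written. First, the weighted-penalty difference. Writing $\Delta=\B{X}\hat{\B{C}}-\B{XB}$, you cannot reach the bound by using $w_i\le M$ on the $\B{XB}$ side, $w_i\ge w_1$ on the $\B{X}\hat{\B{C}}$ side, and then a triangle inequality: the mismatched multipliers leave the uncancelled term $(M-w_1)\|\B{X}\hat{\B{C}}\|_*$, which is not controlled by $\|\Delta\|_F$. The cancellation you correctly identify as the crux requires keeping \emph{identical} weights on both matrices: write $w_i=w_{\hat{r}}-(w_{\hat{r}}-w_i)$, so that $\sum_i w_i\{d_i(\B{XB})-d_i(\B{X}\hat{\B{C}})\}$ splits into a difference of $p_1(\cdot)=\sum_{i\le\hat{r}}d_i(\cdot)$ at the two matrices plus a difference of $p_2(\cdot)=\sum_{i\le\hat{r}}(w_{\hat{r}}-w_i)d_i(\cdot)$ with the roles of $\B{XB}$ and $\B{X}\hat{\B{C}}$ swapped; since the weights $w_{\hat{r}}-w_i$ are non-negative and non-increasing in $i$, both $p_1$ and $p_2$ are subadditive by Theorem \ref{thm:convexity}, giving $\sum_{i\le\hat{r}}(2w_{\hat{r}}-w_i)d_i(\Delta)$, and then Cauchy--Schwarz together with $w_{\hat{r}}=w_{r^*}\le M$ and $w_i\ge w_1$ yields $\sqrt{r^*}\,(2M-w_1)\|\Delta\|_F$ on $\mathcal{G}$. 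This is exactly how the paper proceeds, both here and in Theorem \ref{bound:th2}. Second, a constant-level issue in part (2): your splitting $\|\Delta\|_F\le\|\B{X}\hat{\B{C}}-\B{XC}\|_F+\|\B{XB}-\B{XC}\|_F$ followed by Young's inequality gives $2K^2/\{a(1-a)\}$ rather than the stated $K^2/\{a(1-a)\}$, where $K=\{(2+\sqrt{2}\delta)M-w_1\}\lambda\sqrt{r^*}$. The paper instead applies Young's inequality to $2K\|\Delta\|_F$ first and then bounds $\|\Delta\|_F^2$ by $\|\B{X}\hat{\B{C}}-\B{XC}\|_F^2+\|\B{XB}-\B{XC}\|_F^2$ (a step that silently drops the cross term $-2\langle\B{X}\hat{\B{C}}-\B{XC},\B{XB}-\B{XC}\rangle_F$, so your route is actually the safer one, at the cost of a factor of 2). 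Part (3) is unaffected by this, since taking $\B{B}=\B{C}$ makes $\Delta=\B{X}\hat{\B{C}}-\B{XC}$ and the two routes coincide, yielding the stated bound after substituting $\lambda M=(1+\theta)\sigma(\sqrt{r_x}+\sqrt{q})/\delta$.
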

The proof is similar to that of Theorems \ref{th:rank} and \ref{bound:th2} and hence is omitted.

The error bounds of the ANN estimator established in Theorem \ref{bound:th2} and Corollary \ref{corollary:2} are comparable to those of the RSC and NNP estimators \citep{bunea2011,Rohde2011}. The rate of convergence is $(r_x+q)r^*$, which is the optimal minimax rate for rank sparsity under suitable regularity conditions \citep{Rohde2011,bunea2012joint}. However, the bounds for NNP was obtained with some extra restrictions on the design matrix, and its tuning sequence that achieves the smallest mean squared error (MSE) usually does not lead to correct rank recovery \citep{bunea2011}. While both RSC and ANN are able to achieve correct rank recovery and minimal MSE simultaneously, the latter possess continuous solution path with data-driven tuning which may lead to improved empirical performance.

\section{Robustification of the reduced-rank estimation}\label{sec:robust}

As suggested by a referee and motivated by \citet{mukh2011}, we discuss the robustification of the reduced-rank methods by incorporating extra $l_2$ penalty in the penalized criteria.

\citet{mukh2011} proposed the robust reduced-rank ridge (RoRR) method which performs $l_2$ penalized ridge regression under rank constraint. The shrinkage estimation induced by the $l_2$ penalty makes the reduced rank estimation robust and especially suitable when the predictors are highly correlated. The method can be viewed as the following PLS criterion that was also mentioned in \citet{bunea2011},
\begin{align}
\frac{1}{2}\|Y-XC\|^2_F+\lambda_1 r(C) + \frac{1}{2}\lambda_2 \sum_{i=1}^{h} d_i^2(C), \label{RoRRR}
\end{align}
where $h=\min(p,q)$, $\sum_i d_i^2(C)=\M{tr}(C^\T C)$, and $\lambda_1$ and $\lambda_2$ are tuning parameters. The problem can be solved via data augmentation. Specifically, let
$$
 Y^*= \begin{pmatrix} Y \\ 0_{p\times q} \end{pmatrix}, \qquad X^* = \begin{pmatrix}X\\ \sqrt{\lambda_2}I_{p\times p}\end{pmatrix},
$$
then (\ref{RoRRR}) can be written as an RSC criterion $1/2\|Y^*-X^*C\|_F^2+\lambda_1 r(C)$, whose solution is given by (\ref{RSC}); see \citet{mukh2011}.

Similarly, the proposed ANN method can also be robustified by incorporating a ridge penalty term. Similar to (\ref{wnnobj}), for efficient computation, we impose an $l_2$ penalty on $XC$ rather than $C$,
\begin{align}
\frac{1}{2}\|Y-XC\|_F^2+\lambda_1 \sum_{i=1}^{h}w_i d_i(XC) + \frac{1}{2}\lambda_2\sum_{i=1}^{h} d_i^2(XC).\label{RoANN}
\end{align}
Interestingly, this robustified ANN criterion (RoANN) is  analogous to the adaptive elastic net criterion \citep{zou2005} in univariate regression. (The case of imposing one or both penalties on $C$ directly is more complex and will be pursued elsewhere.) It can be easily verified that, for fixed tuning parameters, the objective function (\ref{RoANN}) is minimized at
$$
\hat{C}=\frac{1}{1+\lambda_2}\hat{C}^{(\lambda_1 w)}
$$
where $\hat{C}^{(\lambda_1 w)}$ denotes the ANN estimator in the absence of the $l_2$ penalty. Indeed, the extra $l_2$ penalty induces overall shrinkage of the ANN estimator.

For each fixed $\lambda_2$, the RoRR method requires inverting a $p\times p$ matrix of the form $(\B{X}^T\B{X}+\lambda_2\B{I})$ and performing an SVD of a $q\times q$ matrix. When $p$ is much bigger than $n$, the Woodbury matrix identity is useful in speeding up computation \citep{Hager1989}, i.e., $(\B{X}^T\B{X}+\lambda_2\B{I})^{-1}=1/\lambda_2\B{I}-1/\lambda_2^2\B{X}^T(\B{I}+1/\lambda_2\B{XX}^T)^{-1}\B{X}$. On the other hand, the RoANN method only requires one-time matrix inversion and SVD operation for obtaining the whole solution path, thereby saving computation. We shall examine the effects of an additional $l_2$ penalization by simulation in Section \ref{sec5}. Due to space limit, relevant theoretical analysis will be pursued elsewhere.

\section{Empirical studies}\label{sec5}

\subsection{Simulation}\label{sec5:1}

We compare the prediction, estimation and rank determination performances of the NNP estimator proposed by \citet{yuan2007}, the RSC estimator proposed by \citet{bunea2011}, the RoRR estimator proposed by \citet{mukh2011}, and our proposed ANN and RoANN estimator. In ANN estimation the adaptive weights are constructed as (\ref{eq:weights}) with $\gamma=0,1,2$ and we denote the resulting estimator as $\M{ANN}_\gamma$ ($\gamma=0$ means unweighted ANN). In the numerical results reported below, we use the accelerated proximal gradient algorithm implemented in Matlab by \citet{toh2009} for NNP estimation. R code for RoRR was provided by their original authors \citep{mukh2011}, and we modified their code to make use of the Woodbury matrix identity \citep{Hager1989} for saving computation. We have also implemented all the other methods in R \citep{r2008}. All computation was done on linux machines with 3.4 GHz CPU and 8 GB RAM.

We consider the same simulation models as in \citet{bunea2011}. The coefficient matrix $\B{C}$ is constructed as $\B{C} = b\B{C}_0\B{C}_1^\T$, where $b>0$, $\B{C}_0 \in \Re^{p\times r^*}$, $\B{C}_1\in \Re^{q\times r^*}$ and all entries in $\B{C}_0$ and $\B{C}_1$ are i.i.d. $N(0,1)$. Two scenarios of model dimensions are considered, i.e., $p,q<n$ and $p,q>n$.
\begin{itemize}
\item Model I ($n=100$, $p=q=25$, $r^*=10$): The covariate matrix $\B{X}$ is constructed by generating its $n$ rows as i.i.d. samples from a multivariate normal distribution $\M{MVN}(\B{0},\Gamma)$, where $\Gamma=(\Gamma_{ij})_{p\times p}$ and $\Gamma_{ij}=\rho^{|i-j|}$ with some $0<\rho<1$.
\item Model II ($n=20$, $p=q=25$, $r^*=5$, $r_x=10$): The covariate matrix $\B{X}$ is generated as $\B{X}=\B{X}_0\Gamma^{1/2}$, where $\Gamma$ is defined as above, $\B{X}_0=\B{X}_1\B{X}_2$, $\B{X}_1\in \Re^{n\times r_x}$, $\B{X}_2\in \Re^{r_x\times p}$, and all entries of $\B{X}_1$, $\B{X}_2$ are i.i.d $N(0, 1)$.
\end{itemize}
The data matrix $\B{Y}$ is then generated by $\B{Y}=\B{X}\B{C}+\B{E}$, where the elements of $\B{E}$ are i.i.d. samples from $N(0,1)$. It can be seen that each simulated model is characterized by the following parameters : $n$ (sample size), $p$ (number of predictors), $q$ (number of responses), $r^*$ (rank of $\B{C}$), $r_x$ (rank of $\B{X}$), $\rho$ (design correlation), and $b$ (signal strength). The experiment was replicated 500 times for each parameter setting.

To alleviate the influence of the inaccuracy in empirical tuning parameter selection and to reveal the true potential of each penalized method for fair comparison, one way is to tune each method based on its prediction accuracy evaluated on a very large independently generated validation data set; this yields ``optimally tuned'' estimators denoted as $\M{NNP}^{(O)}$, $\M{RSC}^{(O)}$, $\M{ANN}_{\gamma}^{(O)}$, etc. The 10-fold cross validation method is also used with the actual data, which results in the estimators $\M{NNP}^{(C)}$, $\M{RSC}^{(C)}$, $\M{ANN}^{(C)}$, etc. (Although to save space, we only report the CV results of $\M{RSC}$ and $\M{ANN}_2$.) For each method, the model accuracy is measured by the average of the scaled mean-squared-errors (SMSE) from all runs, i.e., $\M{SMSE}=100\|\B{C}-\hat{\B{C}}\|_F^2/(pq)$ for estimation (Est), and $\M{SMSE}=100\|\B{X}\B{C}-\B{X}\hat{\B{C}}\|_F^2/(nq)$ for prediction (Pred). Their standard errors are also reported. To evaluate the rank determination performance, we report (1) the average of the estimated ranks from all runs, and (2) the percentage of correct rank identification. For each method, the average computation time per replication is reported.

\begin{table}
\begin{center}
{\caption{\label{table1} Comparison of estimation, prediction and rank determination performances of various reduced-rank estimators using Model I ($n=100,p=25,q=25,r^*=10$). The superscript $^{(O)}$ stands for optimal tuning, and $^{(C)}$ stands for cross validation. The estimation error (Est) and prediction error (Pred) are reported along with their standard errors in the parenthese. For rank estimation (Rank), the average of estimated rank and the percentage of correct rank identification are reported. The simulation is based on 500 replications, and the average running time of each replication is reported in seconds (Time).}}
%\doublespacing
\begin{tiny}
% Table generated by Excel2LaTeX from sheet 'Sheet16'
\begin{tabular}{ll|rrrrrrrrr}
\hline
       $b$ &        ERR &                                                                                       \multicolumn{ 9}{|c}{Method} \\

           &            & $\M{NNP}^I$ & $\M{RSC}^O$ & $\M{RSC}^C$ & $\M{ANN}^{O}_2$ & $\M{ANN}^{C}_2$ & $\M{ANN}^{O}_1$ & $\M{ANN}^{O}_0$ & $\M{RoRR}^O$ & $\M{RoANN}^{O}_2$ \\
\hline
           &            &                                                                                   \multicolumn{ 9}{|c}{$\rho=0.9$} \\
\hline
      0.05 &        Est & 1.57 (0.2) &  3.20 (0.6) & 3.23 (0.7) & 2.61 (0.4) & 2.67 (0.5) & 2.48 (0.4) & 2.46 (0.3) &  1.60 (0.2) & 2.48 (0.4) \\

           &       Pred & 7.82 (0.8) & 12.22 (1.3) & 12.69 (1.6) &   9.94 (1) & 10.22 (1.1) &   9.68 (1) &  10.48 (1) & 8.06 (0.8) &   9.78 (1) \\

           &       Rank &  7.66, 2.8 &    3.28, 0 &     3.10, 0 &    5.53, 0 &    4.88, 0 &    6.71, 0 & 10.89, 25.6 &  7.57, 3.6 &    6.01, 0 \\

       0.1 &        Est & 3.53 (0.4) & 5.62 (0.7) & 5.72 (0.8) & 4.52 (0.5) & 4.61 (0.6) & 4.29 (0.5) & 4.31 (0.5) & 3.61 (0.4) &  4.40 (0.5) \\

           &       Pred &  12.07 (1) & 16.07 (1.4) & 16.56 (1.6) & 13.54 (1.2) & 13.78 (1.2) & 13.15 (1.1) & 14.36 (1.1) &  12.48 (1) & 13.37 (1.1) \\

           &       Rank & 11.18, 18.6 &    6.21, 0 &  5.95, 0.2 &  8.12, 4.6 &  7.64, 2.2 & 9.23, 33.2 &   14.06, 0 &  7.85, 5.2 &  8.35, 8.8 \\

       0.2 &        Est & 5.59 (0.6) & 6.84 (0.7) & 7.01 (0.8) & 5.88 (0.6) & 5.97 (0.6) & 5.67 (0.6) & 6.02 (0.6) & 5.54 (0.6) & 5.81 (0.6) \\

           &       Pred & 15.5 (1.1) & 17.17 (1.3) & 17.58 (1.4) & 15.53 (1.1) & 15.7 (1.2) & 15.27 (1.1) & 17.23 (1.3) & 15.36 (1.1) & 15.41 (1.1) \\

           &       Rank & 10.87, 20.8 &    9.00, 26.8 & 8.78, 19.8 & 10.03, 58.8 & 9.71, 55.2 &  10.95, 26 &   16.25, 0 & 9.38, 47.2 &  10.13, 58 \\

       0.3 &        Est &  6.40 (0.7) & 6.83 (0.8) & 6.94 (0.8) & 6.22 (0.7) & 6.29 (0.7) & 6.11 (0.7) & 6.81 (0.7) & 6.05 (0.7) & 6.19 (0.7) \\

           &       Pred & 16.40 (1.2) & 16.86 (1.2) & 17.09 (1.4) & 15.96 (1.2) & 16.06 (1.2) & 15.88 (1.2) & 18.51 (1.2) & 15.91 (1.2) & 15.89 (1.2) \\

           &       Rank & 10.69, 30.2 & 9.81, 81.4 & 9.67, 68.4 & 10.33, 64.2 & 10.14, 73.6 & 11.2, 10.8 &      17.00, 0 &   9.86, 86 &  10.39, 59 \\
\hline
           &            &                                                                                   \multicolumn{ 9}{|c}{$\rho=0.5$} \\
\hline
      0.05 &        Est &  0.80 (0.1) & 1.19 (0.1) & 1.24 (0.1) & 0.92 (0.1) & 0.94 (0.1) & 0.87 (0.1) & 0.85 (0.1) & 0.84 (0.1) & 0.86 (0.1) \\

           &       Pred &  12.36 (1) & 16.84 (1.5) &  17.77 (2) & 13.56 (1.2) & 13.80 (1.3) & 12.97 (1.1) & 13.25 (1.1) & 12.96 (1.1) & 12.99 (1.1) \\

           &       Rank & 12.83, 0.8 &    6.02, 0 &    5.63, 0 &  8.01, 5.2 &   7.60, 1.8 & 9.18, 30.4 &   13.28, 0 &  7.88, 6.4 & 9.05, 26.8 \\

       0.1 &        Est & 1.19 (0.1) & 1.35 (0.1) & 1.38 (0.2) & 1.17 (0.1) & 1.19 (0.1) & 1.13 (0.1) &  1.20 (0.1) & 1.15 (0.1) & 1.13 (0.1) \\

           &       Pred & 16.21 (1.2) & 17.42 (1.2) & 17.82 (1.5) & 15.72 (1.1) & 15.86 (1.2) & 15.37 (1.1) & 16.82 (1.1) & 15.74 (1.1) & 15.38 (1.1) \\

           &       Rank &   15.21, 0 &   9.15, 32 & 8.93, 24.6 & 10.02, 57.8 &  9.80, 56.4 & 10.98, 23.8 &   15.83, 0 & 9.39, 47.2 &  10.37, 52 \\

       0.2 &        Est & 1.36 (0.1) & 1.29 (0.1) & 1.29 (0.1) & 1.23 (0.1) & 1.24 (0.1) & 1.22 (0.1) & 1.41 (0.1) & 1.23 (0.1) & 1.22 (0.1) \\

           &       Pred & 17.63 (1.4) & 16.59 (1.2) & 16.67 (1.2) & 16.08 (1.1) & 16.16 (1.1) & 16.1 (1.1) &   19.00 (1.3) & 16.15 (1.1) & 15.98 (1.1) \\

           &       Rank & 12.59, 0.4 & 9.98, 98.2 &   9.95, 94 & 10.28, 72.2 &  10.15, 85 & 11.16, 10.2 &   17.34, 0 & 9.99, 98.8 &   10.40, 62 \\

       0.3 &        Est & 1.32 (0.1) & 1.25 (0.1) & 1.25 (0.1) & 1.22 (0.1) & 1.23 (0.1) & 1.22 (0.1) & 1.47 (0.1) & 1.22 (0.1) & 1.21 (0.1) \\

           &       Pred & 17.08 (1.3) & 16.22 (1.2) & 16.22 (1.2) & 15.95 (1.2) & 16.03 (1.2) &   16.00 (1.2) & 19.47 (1.3) &   16.00 (1.2) & 15.89 (1.2) \\

           &       Rank &  10.92, 14 &   10.00, 99.8 &   10.00, 99.4 & 10.21, 80.4 & 10.12, 89.8 &  11.04, 17 &   17.83, 0 &   10.00, 99.8 & 10.28, 73.8 \\
\hline
           &            &                                                                                   \multicolumn{ 9}{|c}{$\rho=0.1$} \\
\hline
      0.05 &        Est & 0.64 (0.1) & 0.86 (0.1) &  0.90 (0.1) & 0.69 (0.1) &  0.70 (0.1) & 0.65 (0.1) & 0.65 (0.1) & 0.67 (0.1) & 0.65 (0.1) \\

           &       Pred &  13.25 (1) & 17.39 (1.4) & 18.2 (1.8) & 14.12 (1.2) & 14.41 (1.3) & 13.50 (1.1) & 13.71 (1.1) & 14.01 (1.2) & 13.47 (1.1) \\

           &       Rank &   14.28, 0 &  6.64, 0.6 &    6.21, 0 &   8.48, 12 &  7.95, 5.6 & 9.64, 39.8 &   13.62, 0 &    8.10, 10 & 9.64, 35.4 \\

       0.1 &        Est & 0.85 (0.1) &  0.90 (0.1) & 0.91 (0.1) & 0.81 (0.1) & 0.82 (0.1) & 0.79 (0.1) & 0.85 (0.1) & 0.81 (0.1) & 0.78 (0.1) \\

           &       Pred & 16.82 (1.2) & 17.32 (1.3) & 17.65 (1.4) & 15.88 (1.2) & 16.01 (1.2) & 15.59 (1.2) & 17.16 (1.2) & 16.05 (1.2) & 15.54 (1.2) \\

           &       Rank &   16.51, 0 &   9.46, 52 &   9.27, 40 & 10.16, 63.6 & 9.92, 64.8 & 11.13, 15.4 &   16.09, 0 & 9.61, 63.2 & 10.55, 43.6 \\

       0.2 &        Est & 0.97 (0.1) & 0.85 (0.1) & 0.85 (0.1) & 0.82 (0.1) & 0.83 (0.1) & 0.82 (0.1) & 0.96 (0.1) & 0.82 (0.1) & 0.81 (0.1) \\

           &       Pred & 18.57 (1.4) & 16.39 (1.2) & 16.44 (1.2) & 15.99 (1.1) & 16.07 (1.1) &   16.00 (1.1) & 18.99 (1.3) & 16.02 (1.2) & 15.85 (1.1) \\

           &       Rank &   15.36, 0 &    10.00, 100 &   9.98, 97 & 10.22, 78.4 & 10.18, 84.8 &  11.09, 12 &   17.38, 0 &    10.00, 100 & 10.34, 67.4 \\

       0.3 &        Est & 0.92 (0.1) & 0.83 (0.1) & 0.83 (0.1) & 0.82 (0.1) & 0.83 (0.1) & 0.82 (0.1) &    1.00 (0.1) & 0.82 (0.1) & 0.82 (0.1) \\

           &       Pred & 17.7 (1.5) & 16.15 (1.2) & 16.17 (1.2) & 15.97 (1.2) & 16.05 (1.2) & 16.03 (1.2) & 19.64 (1.3) & 15.99 (1.2) & 15.90 (1.2) \\

           &       Rank &   11.25, 3 &    10.00, 100 & 10.01, 99.4 & 10.18, 82.8 & 10.14, 87.6 & 10.95, 17.2 &   17.86, 0 &    10.00, 100 &  10.26, 76 \\
\hline
\multicolumn{ 2}{c}{Time} &      17.52 &       0.02 &       0.02 &       0.15 &       0.18 &       0.15 &       0.15 &       3.93 &       2.06 \\
\hline
\end{tabular}
\end{tiny}
\end{center}
\end{table}

\begin{table}
\centering
\caption{\label{table2} Comparison of estimation, prediction and rank determination performances of various reduced-rank estimators using Model II ($n=20,p=100,q=25,r^*=5,r_x=10$). The layout of the table is the same as in Table \ref{table1}.}
%\doublespacing
\begin{tiny}
\begin{tabular}{ll|rrrrrrrrr}
\hline
       $b$ &        ERR &                                                                                       \multicolumn{ 9}{|c}{Method} \\

           &            & $\M{NNP}^O$ & $\M{RSC}^O$ & $\M{RSC}^C$ & $\M{ANN}^{O}_2$ & $\M{ANN}^{C}_2$ & $\M{ANN}^{O}_1$ & $\M{ANN}^{O}_0$ & $\M{RoRR}^O$ & $\M{RoANN}^{O}_2$ \\
\hline
           &            &                                                                                   \multicolumn{ 9}{|c}{$\rho=0.9$} \\
\hline
      0.05 &        Est & 1.14 (0.2) & 1.15 (0.2) & 1.15 (0.2) & 1.15 (0.2) & 1.15 (0.2) & 1.15 (0.2) & 1.15 (0.2) & 1.15 (0.2) & 1.15 (0.2) \\

           &       Pred & 34.12 (4.1) &  31.49 (4) & 32.32 (4.6) & 29.08 (3.5) & 29.65 (3.5) & 28.95 (3.4) & 35.28 (4.2) & 29.75 (3.5) & 28.89 (3.5) \\

           &       Rank &  7.38, 1.6 & 4.73, 73.4 & 4.63, 64.4 & 5.19, 71.2 & 4.99, 73.2 &  5.70, 36.4 &    8.26, 0 & 4.83, 83.2 &  5.30, 65.8 \\

       0.1 &        Est &  4.50 (0.7) & 4.52 (0.7) & 4.52 (0.7) & 4.52 (0.7) & 4.52 (0.7) & 4.52 (0.7) & 4.52 (0.7) & 4.52 (0.7) & 4.52 (0.7) \\

           &       Pred & 37.03 (4.4) & 30.83 (3.6) &   31.00 (3.8) & 30.01 (3.6) & 30.40 (3.8) & 30.15 (3.7) & 38.86 (4.8) & 30.25 (3.6) & 29.88 (3.6) \\

           &       Rank &  7.57, 2.8 & 4.99, 99.2 & 4.97, 96.6 & 5.22, 79.4 & 5.13, 87.4 & 5.63, 42.8 &    8.77, 0 & 4.99, 99.4 & 5.27, 74.6 \\

       0.2 &        Est & 18.06 (2.7) & 17.92 (2.6) & 17.92 (2.6) & 17.92 (2.6) & 17.92 (2.6) & 17.92 (2.6) & 17.93 (2.6) & 17.92 (2.6) & 17.92 (2.6) \\

           &       Pred & 34.90 (5.5) & 29.79 (3.6) & 29.81 (3.6) & 29.55 (3.6) & 29.81 (3.6) & 29.67 (3.6) & 40.08 (4.8) & 29.59 (3.6) & 29.52 (3.6) \\

           &       Rank & 6.33, 39.4 &     5.00, 100 &    5.00, 99.8 & 5.22, 79.2 &   5.12, 89 & 5.53, 51.4 &    9.02, 0 &     5.00, 100 & 5.29, 72.8 \\

       0.3 &        Est & 40.42 (5.6) & 40.58 (5.9) & 40.58 (5.9) & 40.58 (5.9) & 40.58 (5.9) & 40.58 (5.9) & 40.58 (5.9) & 40.58 (5.9) & 40.58 (5.9) \\

           &       Pred & 32.29 (5.5) & 30.18 (3.5) & 30.22 (3.6) & 30.06 (3.5) & 30.32 (3.6) & 30.16 (3.5) &  41.29 (5) & 30.04 (3.5) & 30.03 (3.5) \\

           &       Rank &   5.48, 75 &     5.00, 100 &    5.00, 99.6 &   5.18, 82 & 5.15, 87.4 & 5.46, 56.2 &    9.07, 0 &     5.00, 100 & 5.26, 75.2 \\
\hline
           &            &                                                                                   \multicolumn{ 9}{|c}{$\rho=0.5$} \\
\hline
      0.05 &        Est & 1.12 (0.2) & 1.12 (0.2) & 1.12 (0.2) & 1.12 (0.2) & 1.12 (0.2) & 1.12 (0.2) & 1.12 (0.2) & 1.12 (0.2) & 1.12 (0.2) \\

           &       Pred & 34.98 (4.2) & 31.01 (3.9) & 31.57 (4.2) & 29.14 (3.6) & 29.65 (3.8) & 29.12 (3.6) & 35.92 (4.7) & 29.73 (3.6) &   29.00 (3.6) \\

           &       Rank &    7.97, 0 &  4.90, 89.8 &   4.79, 78 &    5.20, 77 & 5.06, 79.2 & 5.68, 37.6 &    8.43, 0 & 4.93, 93.2 & 5.32, 67.8 \\

       0.1 &        Est &  4.50 (0.6) & 4.53 (0.6) & 4.53 (0.6) & 4.53 (0.6) & 4.53 (0.6) & 4.53 (0.6) & 4.53 (0.6) & 4.53 (0.6) & 4.53 (0.6) \\

           &       Pred & 37.83 (4.6) & 30.16 (3.5) & 30.29 (3.6) & 29.58 (3.3) & 29.93 (3.6) & 29.73 (3.4) & 38.98 (4.2) & 29.76 (3.4) & 29.52 (3.4) \\

           &       Rank &  7.76, 7.2 & 4.99, 99.4 &   4.99, 98 &   5.19, 81 &    5.10, 90 & 5.63, 42.2 &    8.86, 0 & 4.99, 99.4 & 5.27, 73.6 \\

       0.2 &        Est & 17.90 (2.6) & 18.03 (2.7) & 18.03 (2.7) & 18.03 (2.7) & 18.03 (2.7) & 18.03 (2.7) & 18.04 (2.7) & 18.03 (2.7) & 18.03 (2.7) \\

           &       Pred & 34.84 (6.1) & 30.05 (3.6) & 30.07 (3.6) & 29.89 (3.6) & 30.15 (3.6) & 29.99 (3.6) &  40.75 (5) & 29.88 (3.6) & 29.85 (3.6) \\

           &       Rank & 6.25, 51.4 &     5.00, 100 &    5.00, 99.8 & 5.21, 79.8 & 5.14, 87.8 & 5.51, 53.4 &    9.07, 0 &     5.00, 100 &   5.27, 75 \\

       0.3 &        Est & 40.44 (6.1) &  40.13 (6) &  40.13 (6) &  40.13 (6) &  40.13 (6) &  40.13 (6) &  40.13 (6) &  40.13 (6) &  40.13 (6) \\

           &       Pred & 31.55 (4.9) & 30.32 (3.1) & 30.32 (3.1) & 30.23 (3.1) & 30.44 (3.1) & 30.3 (3.1) &   41.30 (4) & 30.18 (3.1) & 30.2 (3.1) \\

           &       Rank & 5.34, 85.6 &     5.00, 100 &     5.00, 100 & 5.23, 77.8 &   5.11, 89 & 5.46, 56.8 &    9.16, 0 &     5.00, 100 & 5.32, 70.2 \\
\hline
           &            &                                                                                   \multicolumn{ 9}{|c}{$\rho=0.1$} \\
\hline
      0.05 &        Est & 1.13 (0.2) & 1.13 (0.2) & 1.13 (0.2) & 1.13 (0.2) & 1.13 (0.2) & 1.13 (0.2) & 1.13 (0.2) & 1.13 (0.2) & 1.13 (0.2) \\

           &       Pred & 35.17 (4.2) & 31.02 (3.8) & 31.86 (4.5) & 29.25 (3.5) & 29.7 (3.6) & 29.24 (3.5) & 36.04 (4.3) & 29.85 (3.5) & 29.10 (3.6) \\

           &       Rank &    8.07, 0 & 4.88, 88.2 & 4.77, 76.8 & 5.19, 75.6 &  5.10, 78.8 &    5.70, 35 &    8.41, 0 &   4.91, 91 & 5.32, 65.4 \\

       0.1 &        Est & 4.55 (0.7) & 4.47 (0.7) & 4.47 (0.7) & 4.47 (0.7) & 4.47 (0.7) & 4.47 (0.7) & 4.47 (0.7) & 4.47 (0.7) & 4.47 (0.7) \\

           &       Pred & 37.73 (4.6) & 30.24 (3.5) & 30.33 (3.7) & 29.72 (3.5) & 30.04 (3.5) & 29.81 (3.5) & 38.88 (4.4) & 29.8 (3.5) & 29.64 (3.4) \\

           &       Rank &  7.89, 5.4 &    5.00, 99.8 & 4.99, 99.2 & 5.22, 79.4 & 5.12, 89.2 & 5.62, 43.4 &    8.84, 0 &     5.00, 100 & 5.29, 72.6 \\

       0.2 &        Est & 18.1 (2.6) & 17.92 (2.6) & 17.92 (2.6) & 17.92 (2.6) & 17.92 (2.6) & 17.92 (2.6) & 17.92 (2.6) & 17.92 (2.6) & 17.92 (2.6) \\

           &       Pred & 34.48 (5.9) & 30.13 (3.4) & 30.13 (3.4) & 29.98 (3.4) & 30.24 (3.4) & 30.09 (3.4) & 40.76 (4.7) & 29.97 (3.4) & 29.93 (3.4) \\

           &       Rank & 6.18, 54.4 &     5.00, 100 &     5.00, 100 &   5.23, 78 & 5.13, 88.2 & 5.51, 52.6 &    9.07, 0 &     5.00, 100 &  5.30, 71.6 \\

       0.3 &        Est & 40.48 (6.2) & 40.48 (5.6) & 40.48 (5.6) & 40.48 (5.6) & 40.48 (5.6) & 40.48 (5.6) & 40.48 (5.6) & 40.48 (5.6) & 40.48 (5.6) \\

           &       Pred & 31.31 (5.1) & 30.06 (3.4) & 30.11 (3.5) & 29.96 (3.4) & 30.28 (3.5) & 30.05 (3.4) & 41.17 (4.6) & 29.91 (3.4) & 29.94 (3.4) \\

           &       Rank & 5.31, 86.2 &     5.00, 100 &    5.00, 99.6 &   5.24, 76 &   5.15, 87 & 5.46, 55.2 &    9.08, 0 &     5.00, 100 & 5.34, 67.8 \\
\hline
\multicolumn{ 2}{c}{Time} &      20.22 &       0.02 &       0.08 &       0.15 &       0.26 &       0.15 &       0.15 &       3.95 &       2.49 \\
\hline
\end{tabular}
\end{tiny}
\end{table}

Tables \ref{table1} and \ref{table2} report the simulation results for Model I and II, respectively. We summarize our findings as follows.

\begin{itemize}
  \item We first examine the performance of $\M{ANN}_\gamma$ with different $\gamma$ values. The $\M{ANN}_0$, which does not use the adaptive weights, is not as accurate as the ANN with adaptive weights in both rank estimation and prediction in general. Its behavior is similar to that of the NNP: they both tend to overestimate the rank but may perform well when the signal is weak and the model dimension is not high (Model I). The performance of both $\M{ANN}_1$ and $\M{ANN}_2$ is substantially better than $\M{ANN}_0$ with the aid of adaptive weights. We have also experimented with other $\gamma$ values, and as expected, the ANN estimator behave more and more similar to the RSC as $\gamma$ increases. Our results show that $\gamma=2$ is generally a very good choice. Henceforth we always refer to $\M{ANN}_2$ in the following comparisons with other reduced-rank methods.
  \item The $\M{ANN}_2$ generally outperforms the RSC in both estimation and prediction. The improvement can be substantial, especially when the signal is weak or moderate and the correlation among the predictors is high. For rank determination, both the ANN and the RSC have similar excellent behavior when the signal is moderate to strong and the correlation among the predictors is weak to moderate. We notice that the ANN estimator tends to slightly overestimate the rank; however, the overestimation is generally negligible. When the signal strength is moderate to large and correlation among predictors is weak to moderate ANN does slightly worse in terms of rank selection than RSC yet is able to maintain a small gain in estimation and prediction. This gain is due to the shrinkage effect using soft-thresholding. While RSC is based on hard-thresolding and keeps the first few leading SVD layers completely as signal, in reality all the SVD layers are contaminated by noise, and hence the shrinkage estimation is effective. When the signal is weak and the correction among the predictors is high, the rank determination performance of the RSC can be much worse than that of the ANN, which is also reflected in their differences in prediction and estimation.

  \item The $\M{ANN}_2$ also outperforms the NNP in general, and is more parsimonious than the NNP in both rank reduction and computation. Only when the signal is weak and the correlation among predictors is very high, may the NNP method slightly outperform the ANN (and hence RSC) in estimation and prediction. However, this gain exacts a high cost that may make it not worthwhile, as the NNP often excessively overestimates the rank and is much harder to compute. Our findings regarding the NNP agrees with those reported in \citet{bunea2011}. Note that \citet{bunea2011} also proposed a calibrated NNP method for rank determination, which showed very similar behavior as that of the RSC. (Hence it is not reported here.)

  \item Adding an $l_2$ penalty usually boosts the performance of the reduced-rank methods. The $\M{RoRR}$ may substantially outperform its non-robustified counterpart RSC, especially when the correlation is high, showing the power of shrinkage estimation. The $\M{RoANN}_2$ only slightly outperforms $\M{ANN}_2$, because $\M{ANN}_2$ itself performs adaptive shrinkage estimation and thus the gain from the overall shrinkage induced by the $l_2$ penalty is much limited. $\M{RoRR}$, $\M{ANN}_2$ and $\M{RoANN}_2$ have comparable performance in most cases and are the best methods.
  \item NNP is much more computationally expensive than the other reduced-rank methods. Both RSC and ANN are very fast to compute, and in practice they can always be computed together as they rely on the same SVD operation. Adding an $l_2$ penalty increases computation time. $\M{RoRR}$ takes longer computation time than $\M{RoANN}$, for the reason explained in Section~\ref{sec:robust}.
\end{itemize}
Overall, the ANN approach is preferable to both the RSC and the NNP methods, especially for cases when the data are noisy and the correlation among the predictors is high. Adding extra $l_2$ penalty is certainly worthwhile, especially for RSC, but it may incur a lot of computation efforts.

\subsection{An application in Genomics}

We consider a breast cancer data set \citep{witten2009,bunea2010}, which consists of the gene expression measurements and the comparative genomic hybridization (CGH) measurements for $n = 89$ subjects. The data set is available in the R package \emph{PMA}, and a detailed description can be found in \citet{chin2006}.

Prior studies have demonstrated a link between DNA copy-number changes and cancer risk \citep{pollack2002,peng2010}. It is thus of interest to examine the relationship between DNA copy number variations (CNVs) and gene expression profiles (GEPs), for which multivariate regression methods can be useful. Biologically, it makes sense to regress GEPs on CNVs since the latter play an important role in regulating the former. The reverse approach of regressing CNVs on GEPs is also meaningful, in that the resulting predictive model may identify functionally relevant CNAs; this approach has been shown to be promising in enhancing the limited CGH analysis with the wealth of GEP data \citep{Geng2011,Zhou2012}. We thus try both approaches, i.e., setting 1: designate the CNVs on the CGH spots of a chromosome as predictors ($\B{X}_{n\times p}$), and the GEPs of the same chromosome as responses ($\B{Y}_{n\times q}$), and setting 2: reverse the roles of $\B{X}$ and $\B{Y}$. Both the responses and predictors are centered and standardized.

We focus on chromosomes 14 and 21, which were previously studied in \citet{bunea2010}. In this study, $p$ and $q$ are either comparable or much larger than $n=89$. To alleviate the high-dimensionality problem, the reduced-rank methods are appealing for identifying a few linear combinations of predictors for optimally predicting the response variables. We perform model estimation using various reduced-rank methods, with the tuning parameters selected by ten-fold cross validation. (We obtained similar results with five-fold and fifteen-fold cross validations.) The cross validation error rate (CVE) (averaged over the number of response variables and the sample size) is then used to compare the predictive performance of different penalization schemes. Note that $\mbox{CVE}=1$ corresponds to a null model with zero coefficient matrix. Table \ref{table:gene} reports the CVE, the estimated rank and the computation time.

\begin{table}
\centering
\caption{\label{table:gene} Performances of reduced-rank estimators on breast cancer data. Setting 1 regresses GEPs on CNVs, and setting 2 regresses CNVs on GEPs.}
{
% Table generated by Excel2LaTeX from sheet 'Sheet6'
\begin{scriptsize}
\begin{tabular}{l|rrrrr}
\hline
           &                                   \multicolumn{ 5}{|c}{Method} \\

           & $\M{NNP}^C$ & $\M{RSC}^C$ & $\M{RoR}^C$ & $\M{ANN}^{C}_2$ & $\M{RoANN}^{C}_2$ \\
\hline
           & \multicolumn{ 5}{|c}{Setting 1: Chromosome 14 ($q=641$, $p=76$)} \\
\hline
       CVE &       0.84 &       1.00 &       0.85 &       0.96 &       0.92 \\

      Rank &          4 &          0 &          7 &          1 &          1 \\

      Time &     2190.4 &        5.5 &      689.0 &       24.6 &       38.2 \\
\hline
           & \multicolumn{ 5}{|c}{Setting 2: Chromosome 14 ($q=76$, $p=641$)} \\
\hline
       CVE &       0.69 &       0.59 &       0.59 &       0.58 &       0.58 \\

      Rank &         23 &          5 &          5 &         11 &         17 \\

      Time &     2158.5 &       13.5 &      186.7 &       13.8 &       16.8 \\
\hline
\hline
           & \multicolumn{ 5}{|c}{Setting 1: Chromosome 21 ($q=227$, $p=44$)} \\
\hline
       CVE &       0.84 &       0.95 &       0.83 &       0.87 &       0.86 \\

      Rank &          3 &          1 &          2 &          1 &          1 \\

      Time &     1266.4 &        0.5 &       66.2 &        2.6 &        8.4 \\
\hline
           & \multicolumn{ 5}{|c}{Setting 2: Chromosome 21 ($q=44$, $p=227$)} \\
\hline
       CVE &       0.69 &       0.65 &       0.63 &       0.62 &       0.62 \\

      Rank &          5 &          1 &          1 &          1 &          2 \\

      Time &     1226.0 &        0.5 &       28.6 &        0.7 &        2.9 \\
\hline
\end{tabular}
\end{scriptsize}
}
\end{table}

In setting 1, the SNR is very low as reflected by the CVEs being close to 1. The RSC may fail to pick up any signal (for chromosome 14), while the other methods, especially NNP and RoRR, perform better owing to the power of shrinkage estimation. In setting 2, the SNR is relatively higher, and the ANN and RoANN have better prediction performance than all the other methods. In Section 4, we have shown that the prediction error is of the order $(r_x+q)r^*$, where $r_x \leq \min(n,p)$ is the rank of the design matrix $\B{X}$. This may partly explain why in setting 1 the prediction is always poorer than in setting 2, because $q$ is much larger than $p$ when the CNVs (GEPs) serve as predictors (responses). For rank estimation, the NNP method always yields higher rank estimates than the other methods; the ANN estimated rank is higher than that of RSC for chormosome 14 in setting 2, otherwise their rank estimates are similar. Adding $l_2$ penalty improves the predictive performance of the reduced-rank methods, especially the improvement from RSC to RoRR can be substantial. The robustified methods may also yield higher rank estimates. As borne out by the simulation study, such behaviors result from the hybridization between reduced-rank methods and ridge regression. Both NNP and RoRR are computationally intensive for large datasets, while RSC, ANN and RoANN are much faster to compute. Overall, it can be seen that the proposed ANN approach shows better performance than the RSC, with not much extra computational cost.

%The data are randomly divided to 10 splits. Each time we use 9 out of 10 splits of the samples to perform model estimation using various reduced-rank methods in which the tuning parameters are selected by 10-fold cross-validation; the remaining 10\% data is then used as test data to evaluate the out-sample performance. This is repeated over each 90\%/10\% split (10 times) and we summertime the averaged performance of each method in Table~\ref{table:gene}.

\section{Discussion}\label{sec:discussion}

There are several potential directions for future research. We have mainly considered the adaptive nuclear-norm penalization on $\B{XC}$ which yields a computationally efficient SVD-thresholding estimator for dimension reduction and shrinkage estimation. It is interesting to consider an ANN criterion that puts the adaptive nuclear-norm penalization directly on the coefficient matrix $\B{C}$, i.e., $\{\|\B{Y}-\B{XC}\|_F^2+\lambda\sum w_i d_i(\B{C})\}$. This criterion may be optimized by similar iterative SVD-thresholding method used in solving the NNP problem. Although the computation will be more intensive, the advantage is that this criterion would result in simultaneous adaptive rank reduction and shrinkage estimation. %of not only the singular values but also the singular vectors. %which has great potential of improving prediction in many practical situations.

The proposed ANN method can serve as the building block to study a family of singular value penalties. This is based on the connection between an adaptive $l_1$-type penalty and many concave penalty functions such as SCAD \citep{fan2001} and bridge penalty \citep{knight2000}. Consider the general regression problem (\ref{criterion1}) with a general singular value penalty
$\mathcal{P}_\lambda(\B{C})=\sum_{i=1}^{h} p_{\lambda}(d_i)$, where $p_{\lambda}(\cdot)$ is a penalty function, e.g., $l_q$ bridge penalty $p_\lambda (|d_i|)=\lambda|d_i|^q$ ($0<q<1$) \citep{Rohde2011}. In this setup the optimization of (\ref{criterion1}) can be challenging. A promising approach is to adopt a local linear approximation \citep{zou2008},
$p_\lambda(|d_i|)\approx p_\lambda(|d_i^{(0)}|) + p_\lambda'(|d_i^{(0)}|)(|d_i|-|d_i^{(0)}|)$, for $d_i\approx d_i^{(0)}$, where $d_i^{(0)}$ is some initial estimator of $d_i$ which for example can be obtained by the LS method. It can be seen that for fixed $d_i^{(0)}$, up to a constant, the first-order approximated penalty admits exactly an ANN form. This suggests the ANN estimator with the weights $p_\lambda'(|d_i^{(0)}|)$ can be viewed as an one-step estimator of these problems, and these problems may be solved by an iteratively reweighted ANN approach. %We shall further explore this topic elsewhere.

%e.g., $\|\B{Y}-\B{XC}\|_F^2+ \lambda_1 \|\B{C}\|_*+\lambda_2\M{tr}(\B{C}^\T\B{C})$. Interestingly, given the fact that $\|\B{C}\|_*=\sum d_i(\B{C})$ and $\M{tr}(\B{C}^\T\B{C})=\sum d_i(\B{C})^2$,

It is shown that incorporating an extra ridge penalty can induce further shrinkage and hence improve the reduced-rank estimation. When combined with the ANN penalty, such a criterion bears resemblance to the elastic-net criterion (Enet) \citep{zou2005} in univariate regression. It would be interesting to investigate the properties of the SVD-Enet approaches. Another pressing problem concerns further extending the regularized reduced-rank regression methods to generalized linear models and nonparametric regression models \citep{yee2003,li2007}. On the optimization aspect, it is interesting to study the usage of ANN in some classical sparse optimization areas, such as matrix completion \citep{candes2011}.

\appendix
%\section*{Appendix}
%\subsection*{General}
\section*{Technical details}

\subsection*{Proof of Theorem \ref{thm:convexity}}

%\section*{Remarks on the penalty term}\label{sec:penremark}
\begin{proof}
First we show by a counter example that if we have an index $k$ such that $w_{k} < w_{k+1}$, then $f(\cdot)$ is
non-convex. Let $\B{C}$ and $\B{D}$ be diagonal $p\times q$ matrices such that $c_{ii} = i$, for $i=1,...,h$, while $\B{D}$ equals
$\B{C}$ but with entries switched at positions $h-k+1$ and $h-k$ on the diagonal. It is then easy to verify that
\begin{align*}
f(\B{C}) = f(\B{D}) =& \sum_{i=1}^h w_i (h-i+1),\\
f(\frac{\B{C}+\B{D}}{2}) -  \frac{f(\B{C})}{2} -  \frac{f(\B{D})}{2}=&
(h-k+0.5)(w_k+w_{k+1}) - (h-k+1)w_k - (h-k) w_{k+1}\\
 =& 0.5(w_k+w_{k+1}) - w_k > 0,
\end{align*}
where $f(\cdot)$ is defined in (\ref{def:ann}). Therefore $f(\cdot)$ is
non-convex.

Next we prove that for $w_1 \geq \cdots \geq w_{h} \geq 0$, $f(\cdot)=\|\cdot\|_{w*}$ is a convex function. First consider the case that $w_h > 0$, and define the following function on $\Re^h$:
\begin{equation}\label{def:w}
w(\B{x}) = \sum_{i=1}^h w_i \left| \B{x} \right|_{\delta(i)},
\end{equation}
where $\delta$ is a permutation of $\{1, ..., h\}$ determined by $\B{x}$ such that
$\left| \B{x} \right|_{\delta(1)} \geq \left| \B{x} \right|_{\delta(2)} \geq \cdots \geq \left| \B{x} \right|_{\delta(h)}$, where $|\B{x}|$ is the vector of absolute values of $\B{x}$. We claim that $w(\cdot)$ is a \textit{symmetric gauge function}
(see \citet[Definition 7.4.23]{MatAnalysis} for reference),
i.e., it satisfies the following six conditions: (a) $w(\B{x}) \geq 0, \forall \B{x}$; (b) $w(\B{x}) = 0$ if and only if $\B{x} = 0$;
(c) $w(\alpha \B{x}) = |\alpha| w(\B{x})$, $\forall \alpha \in \Re$; (d) $w(\B{x}+\B{y}) \leq w(\B{x}) + w(\B{y})$; (e) $w(\B{x}) = w(|\B{x}|)$; (f) $w(\B{x}) = w(\tau(\B{x}))$ for any $\tau$ is a permutation of indices $\{1, ..., h\}$.

All conditions except (d) are trivial to verify. Now we prove (d). Let
$\delta$,$\sigma$,$\tau$ be permutations such that $\left\{\left| \B{x}+\B{y} \right|_{\delta(i)}\right\}$,
$\left\{\left| \B{x} \right|_{\sigma(i)}\right\}$ and
$\left\{\left| \B{y} \right|_{\tau(i)}\right\}$ are placed in non-increasing order respectively.
\begin{eqnarray*}
w(\B{x}+\B{y}) &=& \sum_{i=1}^h w_i \left| \B{x}+\B{y}\right| _{\delta(i)} \leq
\sum_{i=1}^h \left\{w_i \left| \B{x} \right|_{\delta(i)} +w_i \left| \B{y}\right| _{\delta(i)}\right\} \\
& \leq & \sum_{i=1}^h \left\{w_i \left| \B{x} \right|_{\sigma(i)} +w_i \left| \B{y}\right| _{\tau(i)}\right\}
 = w(\B{x}) + w(\B{y}).
\end{eqnarray*}
where the second inequality is due to the Hardy-Littlewood-P\'{o}lya inequality \citep{HardyLittlewoodPolya_Ineq}.

%\cite[Section 10.2]{HardyLittlewoodPolya_Ineq}.

Then by a straightforward application of  \citet[Theorem 7.4.24]{MatAnalysis}, since $\|\B{C}\|_{w*} = w([d_1(\B{C}), d_2(\B{C}), ..., d_h(\B{C})]^\T)$,
$\|\cdot\|_{w*}$ defines a matrix norm and hence is a convex function.

For the case that $w_h = 0$, let $s$ to be the largest index such that $w_s > 0$. For $0 < \epsilon < w_s$, consider the perturbated $\tilde{w}$ that $\tilde{w}_i = w_i$, for $i=1,...,s$,
and $\tilde{w}_i = \epsilon$, for $i = s+1,...,h$. Then for any $\B{A}, \B{B} \in \Re^{n\times q}$, $\|\frac{\B{A}+\B{B}}{2}\|_{\tilde{w}*} \leq \frac{\|\B{A}\|_{\tilde{w}*}}{2}
+ \frac{\|\B{B}\|_{\tilde{w}*}}{2}$. By taking $\epsilon \to 0$, $\|\frac{\B{A}+\B{B}}{2}\|_{w*} \leq \frac{\|\B{A}\|_{w*}}{2} + \frac{\|\B{B}\|_{w*}}{2}$.
Therefore $\left\|\cdot\right\|_{w*}$ is convex.
\end{proof}

\subsection*{Proof of Theorem \ref{thm:optimality}}

\begin{proof}%[ of Theorem \ref{thm:optimality}]
We first prove that $\hat{\B{C}}$ is indeed a global optimal solution to (\ref{eq:DC}).
Since the penalty term only depends on the singular values of $\B{C}$, by letting
$\B{g}=\{g_i\}_{i=1}^h = \B{d}(\B{C})$ (which implies the entries of $\B{g}$ are in non-increasing order),
(\ref{eq:DC}) can be equivalently written as:
$$
\min_{\B{g}: g_1\geq \cdots \geq g_h \geq 0}
\left\{ \min_{\tiny \begin{array}{c}\B{C}\in \Re^{n \times q}\\ \B{d}(\B{C}) = \B{g}\end{array}}
\left\{ \frac{1}{2} \| \B{Y} - \B{C} \|_F^2 \right\} + \lambda \sum_{i=1}^h w_i g_i \right\}.%\left( \|\B{C}\|_* - \|\B{C}\|_{\omega*}\right) \right\}
$$
For the inner minimization, we have the inequality
\begin{eqnarray*}
\|\B{Y}- \B{C}\|_F^2 &=& \tr(\B{Y} - \B{C})(\B{Y} - \B{C})^\T \\
&=& \tr(\B{Y} \B{Y}^\T) - 2\tr(\B{Y} \B{C}^\T) + \tr(\B{C} \B{C}^\T) \\
&=& \sum_{i=1}^h d_i^2(\B{Y}) - 2\tr(\B{Y} \B{C}^\T) + \sum_{i=1}^h g_i^2 \\
&\geq& \sum_{i=1}^h d_i^2(\B{Y}) - 2\B{d}(\B{Y})^\T \B{g}+ \sum_{i=1}^h g_i^2.
\end{eqnarray*}
The last inequality is due to von Neumann's trace inequality. See \citet{Mirsky1976} for a proof. The equality holds when $\B{C}$ admits the singular value decomposition $\B{C} = \B{U} \Diag(\B{g}) \B{V}^\T$, where $\B{U}$ and $\B{V}$ are defined in (\ref{svdQ}) as the left and right orthonormal matrices in the SVD of $\B{Y}$. Then the optimization is reduced to
\begin{equation}\label{eq:strictconvex}
\min_{\B{g}: g_1\geq \cdots \geq g_h \geq 0}
\left\{ \sum_{i=1}^h \left( \frac{1}{2}g_i^2 - [d_i(\B{Y}) - \lambda w_i ] g_i + \frac{1}{2} d_i^2(\B{Y})\right) \right\}.%\left( \|\B{C}\|_* - \|\B{C}\|_{\omega*}\right) \right\}
\end{equation}
The objective function is completely separable and takes minimum when $g_i = (d_i(\B{Y})-\lambda w_i)_+$. This is a feasible solution because $\{d_i(\B{Y})\}$ is in non-increasing order, while $\{w_i\}$ is in non-decreasing order. Therefore
$
\hat{\B{C}} = \mathcal{S}_{\lambda \B{w}} (\B{Y}) = \B{U} \Diag\{(\B{d}(\B{Y})-\lambda \B{w})_+\} \B{V}^\T
$
is a global optimal solution to (\ref{eq:DC}). The uniqueness follows by the equality condition for von Neumann's trace inequality
when $\B{Y}$ has a unique SVD, and the uniqueness of the strictly convex optimization (\ref{eq:strictconvex}). This concludes the proof.
\end{proof}

\subsection*{Proof of Theorem \ref{th:rank}}

\begin{proof}[ of Lemma \ref{lemma:rank1}]
By (\ref{est:rank}), $\hat{r} > s \Longleftrightarrow d_{s+1}(\B{PY}) > \lambda^{\frac{1}{\gamma+1}}$ and $\hat{r} < s \Longleftrightarrow d_{s}(\B{PY})\leq \lambda^{\frac{1}{\gamma+1}}$. Then
\begin{align*}
\M{P}(\hat{r}\neq s) = \M{P}\{d_{s+1}(\B{PY})>\lambda^{\frac{1}{\gamma+1}}\M{  or  } d_{s}(\B{PY})\leq \lambda^{\frac{1}{\gamma+1}}\}.
\end{align*}

Based on the Weyl's inequalities on singular values \citep{franklin2000} and observing that $\B{PY}=\B{XC}+\B{PE}$, we have $d_1(\B{PE})\geq d_{s+1}(\B{PY})-d_{s+1}(\B{XC})$ and $d_1(\B{PE})\geq d_{s}(\B{XC})-d_{s}(\B{PY})$. Hence $d_{s+1}(\B{PY}) > \lambda^{\frac{1}{\gamma+1}}$ implies $d_1(\B{PE})\geq \lambda^{\frac{1}{\gamma+1}}-d_{s+1}(\B{XC})$, and $d_{s}(\B{PY})\leq \lambda^{\frac{1}{\gamma+1}}$ implies $d_1(\B{PE})\geq d_s(\B{XC})-\lambda^{\frac{1}{\gamma+1}}$. It then follows that
\begin{align*}
\M{P}(\hat{r}\neq s) \leq \M{P}\{d_1(\B{PE})\geq \min(\lambda^{\frac{1}{\gamma+1}}-d_{s+1}(\B{XC}),d_s(\B{XC})-\lambda^{\frac{1}{\gamma+1}})\}.
\end{align*}
Finally, note that $\min(\lambda^{\frac{1}{\gamma+1}}-d_{s+1}(\B{XC}),d_s(\B{XC})-\lambda^{\frac{1}{\gamma+1}})\ge \delta\lambda^{\frac{1}{\gamma+1}}$. This completes the proof.
\end{proof}

%\subsection{Proof of Lemma \ref{lemma:rank1}}

\begin{lemma}[Lemma 3 of \citet{bunea2011}]\label{lemma:rank2}
Let $r_x=rank(\B{X})$ and suppose Assumption 2 holds. Then for any $t>0$, $E[d_1(\B{PE})]\leq \sigma(\sqrt{r_x}+\sqrt{q})$, and $\M{P}\{d_1(\B{PE})\geq E[d_1(\B{PE})]+\sigma t\}\leq \exp(-t^2/2)$.

%$$
%\M{P}(d_1^2(\B{PE})\geq 32\Gamma_e(r_x+q)(\ln5+c))\leq 2\exp(-c(r_x+q))
%$$
%for all $c>0$.
\end{lemma}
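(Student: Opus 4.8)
The plan is to reduce the bound on $d_1(\B{PE})$ to a statement about the largest singular value of a Gaussian matrix with independent entries, and then to invoke two standard facts about such matrices: a Gaussian comparison inequality for the expected operator norm, and Gaussian concentration of measure for the tail.

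\textbf{Step 1 (reduction to a Gaussian matrix).} Since $\B{P}$ projects onto the $r_x$-dimensional column space of $\B{X}$, write $\B{P}=\B{Q}\B{Q}^\T$ with $\B{Q}\in\Re^{n\times r_x}$ having orthonormal columns, so $\B{Q}^\T\B{Q}=\B{I}_{r_x}$. Left multiplication by a matrix with orthonormal columns preserves singular values, because $(\B{Q}\B{M})^\T(\B{Q}\B{M})=\B{M}^\T\B{M}$; hence $d_1(\B{PE})=d_1(\B{Q}^\T\B{E})$. Set $\B{G}=\B{Q}^\T\B{E}\in\Re^{r_x\times q}$. Under the Gaussian error structure, each column of $\B{E}$ is $N(0,\sigma^2\B{I}_n)$ and the columns are independent, so each column of $\B{G}$ is $N(0,\sigma^2\B{Q}^\T\B{Q})=N(0,\sigma^2\B{I}_{r_x})$ and the columns remain independent. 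Thus $\B{G}$ has i.i.d. $N(0,\sigma^2)$ entries, and it suffices to bound $d_1(\B{G})$ for an $r_x\times q$ standard Gaussian matrix scaled by $\sigma$.

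\textbf{Step 2 (expected operator norm).} Write $\B{G}=\sigma\B{Z}$ with $\B{Z}$ having i.i.d. $N(0,1)$ entries, and represent the operator norm as a Gaussian supremum $d_1(\B{Z})=\sup_{\|u\|=\|v\|=1}u^\T\B{Z}v$ over unit spheres in $\Re^{r_x}$ and $\Re^q$. I would compare the Gaussian process $X_{u,v}=u^\T\B{Z}v$ with the simpler process $Y_{u,v}=g^\T u+h^\T v$, where $g,h$ are independent standard Gaussian vectors, via a Gaussian comparison (Sudakov--Fernique / Gordon / Chevet) inequality; computing $E[\sup Y_{u,v}]=E\|g\|+E\|h\|\le\sqrt{r_x}+\sqrt{q}$ then yields $E[d_1(\B{Z})]\le\sqrt{r_x}+\sqrt{q}$, hence $E[d_1(\B{G})]\le\sigma(\sqrt{r_x}+\sqrt{q})$. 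This is exactly the Davidson--Szarek estimate and may simply be cited.

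\textbf{Step 3 (concentration) and main obstacle.} Viewing $\B{G}\mapsto d_1(\B{G})$ as a function on $\Re^{r_x q}$, it is $1$-Lipschitz with respect to the Frobenius norm, since $|d_1(\B{G})-d_1(\B{G}')|\le d_1(\B{G}-\B{G}')\le\|\B{G}-\B{G}'\|_F$. Applying the Gaussian concentration (Borell--TIS) inequality to this Lipschitz function of the i.i.d. $N(0,\sigma^2)$ entries gives $\M{P}\{d_1(\B{G})\ge E[d_1(\B{G})]+\sigma t\}\le\exp(-t^2/2)$, which combined with Step 1 is the claimed tail bound for $d_1(\B{PE})$. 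Steps 1 and 3 are routine; the substantive ingredient is Step 2, where the bound $E[d_1(\B{G})]\le\sigma(\sqrt{r_x}+\sqrt q)$ is not elementary and rests on a Gaussian comparison theorem rather than a direct calculation. Since the whole statement coincides with Lemma 3 of \citet{bunea2011}, the cleanest option in the paper is to cite it directly; the sketch above records how it follows from first principles.
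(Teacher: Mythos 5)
Your proposal is correct and takes the same route as the paper: the paper supplies no proof of this lemma at all, importing it verbatim as Lemma 3 of \citet{bunea2011}, which is precisely the direct-citation option you identify as cleanest. Your supplementary sketch (reduction $d_1(\B{PE})=d_1(\B{Q}^\T\B{E})$ to an $r_x\times q$ i.i.d.\ Gaussian matrix, the Gordon/Davidson--Szarek comparison bound $E[d_1]\leq\sigma(\sqrt{r_x}+\sqrt{q})$, and Borell--TIS concentration for the $1$-Lipschitz map $d_1$) is the standard argument underlying the cited result and is sound, including the observation that the needed hypothesis is really the Gaussian error structure (Assumption 1) rather than Assumption 2 as the lemma's statement nominally says.
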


%\begin{proof}[Proof of Lemma \ref{lemma:rank2}]
%See \cite{bunea2011}.
%\end{proof}

\begin{proof}[ of Theorem \ref{th:rank}]
%To establish the results in Theorem \ref{th:rank}
When $d_{r^*}(\B{XC})>2\lambda^{\frac{1}{\gamma+1}}$, we have
\begin{align*}
&d_{r^*}(\B{XC})> 2\lambda^{\frac{1}{\gamma+1}} \geq (1+\delta)\lambda^{\frac{1}{\gamma+1}},\M{ and }
d_{r^*+1}(\B{XC}) =0 \leq (1-\delta)\lambda^{\frac{1}{\gamma+1}},
\end{align*}
for some $0<\delta\leq 1$. It can be seen that the effective rank $s$ defined in Lemma \ref{lemma:rank1} equals to the true rank, i.e., $s=r^*$, and
$\min(\lambda^{\frac{1}{\gamma+1}}-d_{r^*+1}(\B{XC}),d_{r^*}(\B{XC})-\lambda^{\frac{1}{\gamma+1}})\ge \delta\lambda^{\frac{1}{\gamma+1}}$. It then follows by using the properties of Gaussian errors presented in Lemma \ref{lemma:rank2} that
\begin{align*}
\M{P}(\hat{r}=r^*)
&\geq 1- \M{P}(d_1(\B{PE})\geq \delta\lambda^{\frac{1}{\gamma+1}})\\
&=1-\M{P}\{d_1(\B{PE})\geq (1+\theta)\sigma(\sqrt{r_x}+\sqrt{q})\}\\
&\geq 1-\exp(-\theta^2(r_x+q)/2) \rightarrow 1.
%&\geq 1- \M{P}(d_1(\B{PE})\geq \delta\lambda M)
\end{align*}
\end{proof}

\subsection*{Proof of Theorem \ref{bound:th2}}

\begin{proof}%[ of Theorem \ref{bound:th2}]

By the definitions of $\hat{\B{C}}$ in (\ref{astest}),
$$
\|\B{Y}-\B{X}\hat{\B{C}}\|_F^2+2\lambda\sum w_id_i(\B{X}\hat{\B{C}})
\leq \|\B{Y}-\B{X}\B{B}\|_F^2+2\lambda\sum w_i  d_i(\B{XB}),
$$
for any $p\times q$ matrix $\B{B}$. Note that
\begin{align*}
&\|\B{Y}-\B{X}\hat{\B{C}}\|_F^2
=\|\B{Y}-\B{XC}\|_F^2
+\|\B{X}\hat{\B{C}}-\B{XC}\|_F^2
+2<\B{E},\B{XC}-\B{X}\hat{\B{C}}>_F,\\
&\|\B{Y}-\B{X}\B{B}\|_F^2
=\|\B{Y}-\B{XC}\|_F^2
+\|\B{X}\B{B}-\B{XC}\|_F^2
+2<\B{E},\B{XC}-\B{XB}>_F.
\end{align*}
Then we have
\begin{align}
&\|\B{X}\hat{\B{C}}-\B{XC}\|_F^2\notag\\
\leq& \|\B{X}\B{B}-\B{XC}\|_F^2
+2<\B{E},\B{X}\hat{\B{C}}-\B{XB}>_F
+2\lambda\{\sum w_i d_i(\B{XB})-\sum w_i d_i(\B{X}\hat{\B{C}})\}\notag\\
\leq& \|\B{X}\B{B}-\B{XC}\|_F^2
+2<\B{PE},\B{X}\hat{\B{C}}-\B{XB}>_F
+2\lambda\{\sum w_i d_i(\B{XB})-\sum w_i d_i(\B{X}\hat{\B{C}})\}\label{th2:eq1}\\
\leq& \|\B{X}\B{B}-\B{XC}\|_F^2
+2d_1(\B{PE})\|\B{X}\hat{\B{C}}-\B{XB}\|_*
+2\lambda\{\sum w_i d_i(\B{XB})-\sum w_i d_i(\B{X}\hat{\B{C}})\}\notag\\
\leq& \|\B{X}\B{B}-\B{XC}\|_F^2
+2d_1(\B{PE})\sqrt{r(\B{X}\hat{\B{C}}-\B{XB})}\|\B{X}\hat{\B{C}}-\B{XB}\|_F
+2\lambda\{\sum w_i d_i(\B{XB})-\sum w_i d_i(\B{X}\hat{\B{C}})\}.\notag
\end{align}

Now consider any $\B{B}$ with $r(\B{B})\leq \hat{r}$,
\begin{align*}
&\sum w_i d_i(\B{XB})-\sum w_i d_i(\B{X}\hat{\B{C}})\\
=&w_{\hat{r}}\sum_{i=1}^{\hat{r}}d_i(\B{XB})-w_{\hat{r}}\sum_{i=1}^{\hat{r}}d_i(\B{X}\hat{\B{C}})
+\sum_{i=1}^{\hat{r}} (w_{\hat{r}}-w_i)d_i(\B{X}\hat{\B{C}})-\sum_{i=1}^{\hat{r}} (w_{\hat{r}}-w_i)d_i(\B{XB}).
\end{align*}
By the definition of the adaptive weights in (\ref{eq:weights}), i.e., $w_i=d_i^{-\gamma}(\B{PY})$, we know that $w_{\hat{r}}-w_1\geq\cdots\geq w_{\hat{r}}-w_{\hat{r}-1}\geq0$. Therefore, both $p_1(\cdot)=\sum_{i=1}^{\hat{r}}d_i(\cdot)$ and $p_2(\cdot)=\sum_{i=1}^{\hat{r}}(w_{\hat{r}}-w_i)d_i(\cdot)$ satisfy the triangular inequality; see the proof of Theorem \ref{thm:convexity}. Moreover, based on Weyl's inequalities \citep{franklin2000} and $\B{PY}=\B{XC}+\B{PE}$, $d_{\hat{r}}(\B{PY})\geq d_{\hat{r}}(\B{XC})-d_1(\B{PE})$ and $d_{1}(\B{PY})\leq d_{1}(\B{XC})+d_1(\B{PE})$. It follows that
\begin{align*}
&\sum w_i d_i(\B{XB})-\sum w_i d_i(\B{X}\hat{\B{C}})\\
\leq & w_{\hat{r}}\sum_{i=1}^{\hat{r}}d_i(\B{X}\hat{\B{C}}-\B{XB}) + \sum_{i=1}^{\hat{r}} (w_{\hat{r}}-w_i)d_i(\B{X}\hat{\B{C}}-\B{XB})\\
\leq &\left\{2d^{-\gamma}_{\hat{r}}(\B{PY})-d_1^{-\gamma}(\B{PY})\right\}\sum_{i=1}^{\hat{r}}d_i(\B{X}\hat{\B{C}}-\B{XB})\\
\leq &\left\{2(d_{\hat{r}}(\B{XC})-d_1(\B{PE}))^{-\gamma}-(d_{1}(\B{XC})+d_1(\B{PE}))^{-\gamma}\right\}\sum_{i=1}^{\hat{r}}d_i(\B{X}\hat{\B{C}}-\B{XB})\\
\leq &\left\{2(d_{\hat{r}}(\B{XC})-d_1(\B{PE}))^{-\gamma}-(d_{1}(\B{XC})+d_1(\B{PE}))^{-\gamma}\right\}\sqrt{\hat{r}}\|\B{X}\hat{\B{C}}-\B{XB}\|_F.
\end{align*}
The last inequality is due to the Cauchy-Schwarz inequality. Using (\ref{th2:eq1}), $r(\B{X}\hat{\B{C}}-\B{XB})\leq r(\hat{\B{C}}-\B{B})\leq 2\hat{r}$ and the inequality $2xy\leq x^2/a+ay^2$ we have
\begin{align*}
\|\B{X}\hat{\B{C}}-\B{XC}\|_F^2
%\leq& \|\B{X}\B{B}-\B{XC}\|_F^2 + 2\left\{d_1(\B{PE})\sqrt{r(\B{X}\hat{\B{C}}-\B{XB})}+\lambda\sqrt{\sum_{i=1}^{\hat{r}}(2w_{\hat{r}}-w_i)^2}\right\}
%\|\B{X}\hat{\B{C}}-\B{XB}\|_F\\
\leq& \|\B{X}\B{B}-\B{XC}\|_F^2+a\|\B{X}\hat{\B{C}}-\B{XB}\|_F^2\\
&+\frac{1}{a}\left\{d_1(\B{PE})\sqrt{2\hat{r}}+2\lambda(d_{\hat{r}}(\B{XC})-d_1(\B{PE}))^{-\gamma}\sqrt{\hat{r}}-
\lambda(d_{1}(\B{XC})+d_1(\B{PE}))^{-\gamma}\sqrt{\hat{r}}\right\}^2
%\leq& (1+a)\|\B{X}\B{B}-\B{XC}\|_F^2 +\frac{1}{a}\left\{d_1(\B{PE})\sqrt{r(\B{X}\hat{\B{C}}-\B{XB})}+2\lambda\{d_{\hat{r}}(\B{XC})-d_1(\B{PE})\}^{-\gamma}\sqrt{\hat{r}}\right\}^2 +a\|\B{X}\hat{\B{C}}-\B{XC}\|_F^2.
\end{align*}
Since $\|\B{X}\hat{\B{C}}-\B{XB}\|_F^2\leq \|\B{X}\hat{\B{C}}-\B{XC}\|_F^2+ \|\B{X}\B{B}-\B{XC}\|_F^2$, consequently, for any $0<a<1$,
\begin{align*}
\|\B{X}\hat{\B{C}}-\B{XC}\|_F^2
%\leq& \frac{1+a}{1-a}\|\B{X}\B{B}-\B{XC}\|_F^2
%+\frac{1}{a(1-a)}\left\{d_1(\B{PE})\sqrt{r(\B{X}\hat{\B{C}}-\B{XB})}+\lambda\sqrt{\sum_{i=1}^{\hat{r}}(2w_{\hat{r}}-w_i)^2}\right\}^2\\
\leq& \frac{1+a}{1-a}\|\B{X}\B{B}-\B{XC}\|_F^2\\
&+\frac{1}{a(1-a)}\left\{\sqrt{2}d_1(\B{PE})+2\lambda(d_{\hat{r}}(\B{XC})-d_1(\B{PE}))^{-\gamma}-\lambda(d_{1}(\B{XC})+d_1(\B{PE}))^{-\gamma}\right\}^2\hat{r}.
\end{align*}
As shown in Theorem \ref{th:rank}, on the event $\{d_1(\B{PE})< \delta\lambda^{\frac{1}{\gamma+1}}\}$, the estimated rank $\hat{r}$ equals to the true rank $r^*$, i.e., $\hat{r}=r^*$, and $\M{P}\{d_1(\B{PE})\geq \delta\lambda^{\frac{1}{\gamma+1}}\}\leq \exp(-\theta^2(r_x+q)/2)$. Also, $d_{r^*}(\B{XC})>2\lambda^{\frac{1}{\gamma+1}}$ and $c=d_{1}(\B{XC})/d_{r^*}(\B{XC})\geq 1$. Therefore, with probability at least $1-\exp(-\theta^2(r_x+q)/2)$,
\begin{align*}
\|\B{X}\hat{\B{C}}-\B{XC}\|_F^2
%\leq& \frac{1+a}{1-a}\|\B{X}\B{B}-\B{XC}\|_F^2
%+\frac{1}{a(1-a)}\left\{d_1(\B{PE})\sqrt{r(\B{X}\hat{\B{C}}-\B{XB})}+\lambda\sqrt{\sum_{i=1}^{\hat{r}}(2w_{\hat{r}}-w_i)^2}\right\}^2\\
\leq& \frac{1+a}{1-a}\|\B{X}\B{B}-\B{XC}\|_F^2
+\frac{1}{a(1-a)}\left\{\sqrt{2}\delta\lambda^{\frac{1}{\gamma+1}}+2\lambda(2-\delta)^{-\gamma}\lambda^{\frac{-\gamma}{\gamma+1}}-\lambda(2c+\delta)^{-\gamma}\lambda^{\frac{-\gamma}{\gamma+1}}\right\}^2r^*\\
\leq& \frac{1+a}{1-a}\|\B{X}\B{B}-\B{XC}\|_F^2
+\frac{1}{a(1-a)}\left\{\sqrt{2}\delta+2(2-\delta)^{-\gamma}-(2c+\delta)^{-\gamma}\right\}^2\lambda^{\frac{2}{\gamma+1}}r^*.
\end{align*}
Since $\B{B}$ is an arbitrary matrix with $r(\B{B})\leq r^*$, the second part of the theorem is obtained by taking $\B{B}=\B{C}$ and $a=1/2$. This completes the proof.

\end{proof}

\subsection*{Proof of Corollary \ref{corollary:2}}

\begin{proof}%[ of Theorem \ref{bound:th2}]
Consider any $\B{B}$ with $r(\B{B})\leq \hat{r}$, we have
\begin{align*}
&\sum w_i d_i(\B{XB})-\sum w_i d_i(\B{X}\hat{\B{C}})\\
=&w_{\hat{r}}\sum_{i=1}^{\hat{r}}d_i(\B{XB})-w_{\hat{r}}\sum_{i=1}^{\hat{r}}d_i(\B{X}\hat{\B{C}})
+\sum_{i=1}^{\hat{r}} (w_{\hat{r}}-w_i)d_i(\B{X}\hat{\B{C}})-\sum_{i=1}^{\hat{r}} (w_{\hat{r}}-w_i)d_i(\B{XB}).
\end{align*}
Note that $w_{\hat{r}}-w_1\geq\cdots\geq w_{\hat{r}}-w_{\hat{r}-1}\geq0$, so both $p_1(\cdot)=\sum_{i=1}^{\hat{r}}d_i(\cdot)$ and $p_2(\cdot)=\sum_{i=1}^{\hat{r}}(w_{\hat{r}}-w_i)d_i(\cdot)$ satisfy the triangular inequality. See the proof of Theorem \ref{thm:convexity}. It then follows that
\begin{align*}
\sum w_i d_i(\B{XB})-\sum w_i d_i(\B{X}\hat{\B{C}})
\leq&w_{\hat{r}}\sum_{i=1}^{\hat{r}}d_i(\B{X}\hat{\B{C}}-\B{XB}) + \sum_{i=1}^{\hat{r}} (w_{\hat{r}}-w_i)d_i(\B{X}\hat{\B{C}}-\B{XB})\\
=&\sum_{i=1}^{\hat{r}} (2w_{\hat{r}}-w_i)d_i(\B{X}\hat{\B{C}}-\B{XB})\\
\leq& \sqrt{\sum_{i=1}^{\hat{r}}(2w_{\hat{r}}-w_i)^2}\|\B{X}\hat{\B{C}}-\B{XB}\|_F.
\end{align*}
The last inequality is due to the Cauchy-Schwarz inequality.

Using (\ref{th2:eq1}) and the inequality $2xy\leq x^2/a+ay^2$ we have
\begin{align*}
&\|\B{X}\hat{\B{C}}-\B{XC}\|_F^2\notag\\
%\leq& \|\B{X}\B{B}-\B{XC}\|_F^2 + 2\left\{d_1(\B{PE})\sqrt{r(\B{X}\hat{\B{C}}-\B{XB})}+\lambda\sqrt{\sum_{i=1}^{\hat{r}}(2w_{\hat{r}}-w_i)^2}\right\}
%\|\B{X}\hat{\B{C}}-\B{XB}\|_F\\
\leq& \|\B{X}\B{B}-\B{XC}\|_F^2
+\frac{1}{a}\left\{d_1(\B{PE})\sqrt{r(\B{X}\hat{\B{C}}-\B{XB})}+\lambda\sqrt{\sum_{i=1}^{\hat{r}}(2w_{\hat{r}}-w_i)^2}\right\}^2
+a\|\B{X}\hat{\B{C}}-\B{XB}\|_F^2\\
\leq& (1+a)\|\B{X}\B{B}-\B{XC}\|_F^2
+\frac{1}{a}\left\{d_1(\B{PE})\sqrt{r(\B{X}\hat{\B{C}}-\B{XB})}+\lambda\sqrt{\sum_{i=1}^{\hat{r}}(2w_{\hat{r}}-w_i)^2}\right\}^2
+a\|\B{X}\hat{\B{C}}-\B{XC}\|_F^2.
\end{align*}
Consequently, for any $0<a<1$,
\begin{align*}
\|\B{X}\hat{\B{C}}-\B{XC}\|_F^2
%\leq& \frac{1+a}{1-a}\|\B{X}\B{B}-\B{XC}\|_F^2
%+\frac{1}{a(1-a)}\left\{d_1(\B{PE})\sqrt{r(\B{X}\hat{\B{C}}-\B{XB})}+\lambda\sqrt{\sum_{i=1}^{\hat{r}}(2w_{\hat{r}}-w_i)^2}\right\}^2\\
\leq& \frac{1+a}{1-a}\|\B{X}\B{B}-\B{XC}\|_F^2
+\frac{1}{a(1-a)}\left\{d_1(\B{PE})\sqrt{2\hat{r}}+\lambda\sqrt{\sum_{i=1}^{\hat{r}}(2w_{\hat{r}}-w_i)^2}\right\}^2.
\end{align*}
As shown in Theorem \ref{th:rank}, on the event $\{d_1(\B{PE})< \delta\lambda M\}$, the estimated rank $\hat{r}$ equals to the true rank $r^*$, i.e., $\hat{r}=r^*$, and $\M{P}\{d_1(\B{PE})\geq \delta\lambda M\}\leq \exp(-\theta^2(r_x+q)/2)$. Therefore, with probability at least $1-\exp(-\theta^2(r_x+q)/2)$,
\begin{align*}
\|\B{X}\hat{\B{C}}-\B{XC}\|_F^2
%\leq& \frac{1+a}{1-a}\|\B{X}\B{B}-\B{XC}\|_F^2
%+\frac{1}{a(1-a)}\left\{d_1(\B{PE})\sqrt{r(\B{X}\hat{\B{C}}-\B{XB})}+\lambda\sqrt{\sum_{i=1}^{\hat{r}}(2w_{\hat{r}}-w_i)^2}\right\}^2\\
\leq& \frac{1+a}{1-a}\|\B{X}\B{B}-\B{XC}\|_F^2
+\frac{1}{a(1-a)}\left\{\lambda\delta M \sqrt{2r^*}+\lambda\sqrt{\sum_{i=1}^{r^*}(2w_{r^*}-w_i)^2}\right\}^2\\
\leq& \frac{1+a}{1-a}\|\B{X}\B{B}-\B{XC}\|_F^2
+\frac{1}{a(1-a)}\lambda^2r^*\{(2+\sqrt{2}\delta)M-w_1\}^2.
\end{align*}
Since $\B{B}$ is an arbitrary matrix with $r(\B{B})\leq \hat{r}$, the second part of the theorem is obtained by taking $\B{B}=\B{C}$ and $a=1/2$. This completes the proof.

\end{proof}

\bibliographystyle{biometrika}
%\bibliography{../../../../Reference/kun-bibtex}
%\bibliography{/Users/CK/Dropbox/Research/Reference/kun-bibtex}

\begin{thebibliography}{44}
\expandafter\ifx\csname natexlab\endcsname\relax\def\natexlab#1{#1}\fi

\bibitem[{Aldrin(2000)}]{aldrin2000}
\textsc{Aldrin, M.} (2000).
\newblock Multivariate prediction using softly shrunk reduced-rank regression.
\newblock \textit{The American Statistician} \textbf{54}, 29--34.

\bibitem[{Anderson(1951)}]{anderson1951}
\textsc{Anderson, T.~W.} (1951).
\newblock Estimating linear restrictions on regression coefficients for
  multivariate normal distributions.
\newblock \textit{Annals of Mathematical Statistics} \textbf{22}, 327--351.

\bibitem[{Anderson(1999)}]{anderson1999}
\textsc{Anderson, T.~W.} (1999).
\newblock Asymptotic distribution of the reduced rank regression estimator
  under general conditions.
\newblock \textit{The Annals of Statistics} \textbf{27}, 1141--1154.

\bibitem[{Anderson(2002)}]{anderson2002}
\textsc{Anderson, T.~W.} (2002).
\newblock Specification and misspecification in reduced rank regression.
\newblock \textit{Sankhy{\={a}}: The Indian Journal of Statistics, Series A}
  \textbf{64}, 193--205.

\bibitem[{Bunea et~al.(2010)Bunea, She \& Wegkamp}]{bunea2010}
\textsc{Bunea, F.}, \textsc{She, Y.} \& \textsc{Wegkamp, M.} (2010).
\newblock Adaptive rank penalized estimators in multivariate regression.

\bibitem[{Bunea et~al.(2011)Bunea, She \& Wegkamp}]{bunea2011}
\textsc{Bunea, F.}, \textsc{She, Y.} \& \textsc{Wegkamp, M.} (2011).
\newblock Optimal selection of reduced rank estimators of high-dimensional
  matrices.
\newblock \textit{The Annals of Statistics} \textbf{39}, 1282--1309.

\bibitem[{Bunea et~al.(2012)Bunea, She \& Wegkamp}]{bunea2012joint}
\textsc{Bunea, F.}, \textsc{She, Y.} \& \textsc{Wegkamp, M.} (2012).
\newblock Joint variable and rank selection for parsimonious estimation of high
  dimensional matrices.
\newblock \textit{arXiv:1110.3556} .

\bibitem[{Cai et~al.(2010)Cai, Cand\`{e}s \& Shen}]{cai2008}
\textsc{Cai, J.-F.}, \textsc{Cand\`{e}s, E.~J.} \& \textsc{Shen, Z.} (2010).
\newblock A singular value thresholding algorithm for matrix completion.
\newblock \textit{SIAM J. on Optimization} \textbf{20}, 1956--1982.

\bibitem[{Cand{\`e}s et~al.(2011)Cand{\`e}s, Li, Ma \& Wright}]{candes2011}
\textsc{Cand{\`e}s, E.~J.}, \textsc{Li, X.}, \textsc{Ma, Y.} \& \textsc{Wright,
  J.} (2011).
\newblock Robust principal component analysis?
\newblock \textit{Journal of the ACM} \textbf{58}, 11.

\bibitem[{Cand{\`e}s \& Recht(2009)}]{candes2009}
\textsc{Cand{\`e}s, E.~J.} \& \textsc{Recht, B.} (2009).
\newblock Exact matrix completion via convex optimization.
\newblock \textit{Found. Comput. Math.} \textbf{9}, 717--772.

\bibitem[{Chen et~al.(2012)Chen, Chan \& Stenseth}]{chen2011jrssb}
\textsc{Chen, K.}, \textsc{Chan, K.-S.} \& \textsc{Stenseth, N.~C.} (2012).
\newblock Reduced rank stochastic regression with a sparse singular value
  decomposition.
\newblock \textit{Journal of the Royal Statistical Society: Series B.}
  \textbf{74}, 203--221.

\bibitem[{Chin et~al.(2006)Chin, DeVries, Fridlyand, Spellman, Roydasgupta,
  Kuo, Lapuk, Neve, Qian \& Ryder}]{chin2006}
\textsc{Chin, K.}, \textsc{DeVries, S.}, \textsc{Fridlyand, J.},
  \textsc{Spellman, P.~T.}, \textsc{Roydasgupta, R.}, \textsc{Kuo, W.-L.},
  \textsc{Lapuk, A.}, \textsc{Neve, R.~M.}, \textsc{Qian, Z.} \& \textsc{Ryder,
  T.} (2006).
\newblock {Genomic and transcriptional aberrations linked to breast cancer
  pathophysiologies}.
\newblock \textit{Cancer Cell} \textbf{10}, 529--541.

\bibitem[{Donoho \& Johnstone(1995)}]{donoho1995}
\textsc{Donoho, D.~L.} \& \textsc{Johnstone, I.~M.} (1995).
\newblock Adapting to unknown smoothness via wavelet shrinkage.
\newblock \textit{Journal of the American Statistical Association} \textbf{90},
  1200--1224.

\bibitem[{Eckart \& Young(1936)}]{eckart1936}
\textsc{Eckart, C.} \& \textsc{Young, G.} (1936).
\newblock The approximation of one matrix by another of lower rank.
\newblock \textit{Psychometrika} \textbf{1}, 211--218.

\bibitem[{Fan \& Li(2001)}]{fan2001}
\textsc{Fan, J.} \& \textsc{Li, R.} (2001).
\newblock Variable selection via nonconcave penalized likelihood and its oracle
  properties.
\newblock \textit{Journal of the American Statistical Association} \textbf{96},
  1348--1360.

\bibitem[{Franklin(2000)}]{franklin2000}
\textsc{Franklin, J.} (2000).
\newblock \textit{Matrix Theory}.
\newblock Phoenix Edition Series. Dover Publications.

\bibitem[{Geng et~al.(2011)Geng, Iqbal, Chan \& Ali}]{Geng2011}
\textsc{Geng, H.}, \textsc{Iqbal, J.}, \textsc{Chan, W.~C.} \& \textsc{Ali,
  H.~H.} (2011).
\newblock {Virtual CGH: an integrative approach to predict genetic
  abnormalities from gene expression microarray data applied in lymphoma.}
\newblock \textit{BMC medical genomics} \textbf{4}.

\bibitem[{Hager(1989)}]{Hager1989}
\textsc{Hager, W.~W.} (1989).
\newblock Updating the inverse of a matrix.
\newblock \textit{SIAM Review} \textbf{31}, pp. 221--239.

\bibitem[{Hardy et~al.({1967})Hardy, Littlewood \&
  P\'{o}lya}]{HardyLittlewoodPolya_Ineq}
\textsc{Hardy, G.~H.}, \textsc{Littlewood, J.~E.} \& \textsc{P\'{o}lya, G.}
  ({1967}).
\newblock \textit{{Inequalities}}.
\newblock {Cambridge University Press}.

\bibitem[{Horn \& Johnson(1985)}]{MatAnalysis}
\textsc{Horn, R.~A.} \& \textsc{Johnson, C.~R.} (1985).
\newblock \textit{Matrix Analysis}.
\newblock Cambridge University Press.

\bibitem[{Izenman(1975)}]{izenman1975}
\textsc{Izenman, A.~J.} (1975).
\newblock Reduced-rank regression for the multivariate linear model.
\newblock \textit{Journal of Multivariate Analysis} \textbf{5}, 248--264.

\bibitem[{Knight \& Fu(2000)}]{knight2000}
\textsc{Knight, K.} \& \textsc{Fu, W.} (2000).
\newblock Asymptotics for lasso-type estimators.
\newblock \textit{The Annals of Statistics} \textbf{28}, 1356--1378.

\bibitem[{Koltchinskii et~al.(2011)Koltchinskii, Lounici \&
  Tsybakov}]{koltch2011}
\textsc{Koltchinskii, V.}, \textsc{Lounici, K.} \& \textsc{Tsybakov, A.}
  (2011).
\newblock {Nuclear norm penalization and optimal rates for noisy low rank
  matrix completion}.
\newblock \textit{The Annals of Statistics} \textbf{39}, 2302--2329.

\bibitem[{Li \& Chan(2007)}]{li2007}
\textsc{Li, M.-C.} \& \textsc{Chan, K.-S.} (2007).
\newblock Multivaraite reduced-rank nonlinear time series modeling.
\newblock \textit{Statistica Sinica} \textbf{17}, 139--159.

\bibitem[{Lu et~al.(2012)Lu, Monteiro \& Yuan}]{Lu2012}
\textsc{Lu, Z.}, \textsc{Monteiro, R. D.~C.} \& \textsc{Yuan, M.} (2012).
\newblock Convex optimization methods for dimension reduction and coefficient
  estimation in multivariate linear regression.
\newblock \textit{Math. Program.} \textbf{131}, 163--194.

\bibitem[{Mirsky(1975)}]{Mirsky1976}
\textsc{Mirsky, L.} (1975).
\newblock A trace inequality of john von neumann.
\newblock \textit{Monatschefte fur Mathematik} \textbf{79}, 303--306.

\bibitem[{Mukherjee \& Zhu(2011)}]{mukh2011}
\textsc{Mukherjee, A.} \& \textsc{Zhu, J.} (2011).
\newblock Reduced rank ridge regression and its kernel extensions.
\newblock \textit{Statistical Analysis and Data Mining} \textbf{4}, 612--622.

\bibitem[{Negahban \& Wainwright(2011)}]{negahban2011}
\textsc{Negahban, S.} \& \textsc{Wainwright, M.~J.} (2011).
\newblock Estimation of (near) low-rank matrices with noise and
  high-dimensional scaling.
\newblock \textit{Annals of Statistics} \textbf{39}, 1069--1097.

\bibitem[{Peng et~al.(2010)Peng, Zhu, Bergamaschi, Han, Noh, Pollack \&
  Wang}]{peng2010}
\textsc{Peng, J.}, \textsc{Zhu, J.}, \textsc{Bergamaschi, A.}, \textsc{Han,
  W.}, \textsc{Noh, D.-Y.}, \textsc{Pollack, J.~R.} \& \textsc{Wang, P.}
  (2010).
\newblock Regularized multivariate regression for identifying master predictors
  with application to integrative genomics study of breast cancer.
\newblock \textit{Ann. Appl. Stat.} \textbf{4}, 53--77.

\bibitem[{Pollack et~al.(2002)Pollack, S{\o}rlie, Perou, Rees, Jeffrey,
  Lonning, Tibshirani, Botstein, B{\o}rresen-Dale \& Brown}]{pollack2002}
\textsc{Pollack, J.~R.}, \textsc{S{\o}rlie, T.}, \textsc{Perou, C.~M.},
  \textsc{Rees, C.~A.}, \textsc{Jeffrey, S.~S.}, \textsc{Lonning, P.~E.},
  \textsc{Tibshirani, R.}, \textsc{Botstein, D.}, \textsc{B{\o}rresen-Dale,
  A.-L.~L.} \& \textsc{Brown, P.~O.} (2002).
\newblock {Microarray analysis reveals a major direct role of DNA copy number
  alteration in the transcriptional program of human breast tumors.}
\newblock \textit{Proceedings of the National Academy of Sciences of the United
  States of America} \textbf{99}, 12963--12968.

\bibitem[{{R Development Core Team}(2008)}]{r2008}
\textsc{{R Development Core Team}} (2008).
\newblock \textit{R: A Language and Environment for Statistical Computing}.
\newblock R Foundation for Statistical Computing, Vienna, Austria.
\newblock {ISBN} 3-900051-07-0.

\bibitem[{Reinsel \& Velu(1998)}]{reinsel1998}
\textsc{Reinsel, G.~C.} \& \textsc{Velu, P.} (1998).
\newblock \textit{Multivariate reduced-rank regression: theory and
  applications}.
\newblock New York: Springer.

\bibitem[{Rohde \& Tsybakov(2011)}]{Rohde2011}
\textsc{Rohde, A.} \& \textsc{Tsybakov, A.} (2011).
\newblock {Estimation of High-Dimensional Low-Rank Matrices}.
\newblock \textit{The Annals of Statistics} \textbf{39}, 887--930.

\bibitem[{Stone(1974)}]{stone1974}
\textsc{Stone, M.} (1974).
\newblock Cross-validation and multinomial prediction.
\newblock \textit{Biometrika} \textbf{61}, 509--515.

\bibitem[{Tibshirani(1996)}]{tib1996}
\textsc{Tibshirani, R.} (1996).
\newblock Regression shrinkage and selection via the lasso.
\newblock \textit{Journal of the Royal Statistical Society (Series B)}
  \textbf{58}, 267--288.

\bibitem[{Toh \& Yun(2010)}]{toh2009}
\textsc{Toh, K.-C.} \& \textsc{Yun, S.} (2010).
\newblock An accelerated proximal gradient algorithm for nuclear norm
  regularized least squares problems.
\newblock \textit{Pacific J. Optim.} \textbf{6}, 615--640.

\bibitem[{Witten et~al.(2009)Witten, Tibshirani \& Hastie}]{witten2009}
\textsc{Witten, D.~M.}, \textsc{Tibshirani, R.} \& \textsc{Hastie, T.} (2009).
\newblock A penalized matrix decomposition, with applications to sparse
  principal components and canonical correlation analysis.
\newblock \textit{Biostatistics} \textbf{10}, 515--534.

\bibitem[{Xu(2009)}]{xu2009}
\textsc{Xu, J.} (2009).
\newblock Reweighted nuclear norm minimization for matrix completion.
\newblock \textit{Unpublished Manuscript} .

\bibitem[{Yee \& Hastie(2003)}]{yee2003}
\textsc{Yee, T.} \& \textsc{Hastie, T.} (2003).
\newblock Reduced rank vector generalized linear models.
\newblock \textit{Statistical Modeling} , 367--378.

\bibitem[{Yuan et~al.(2007)Yuan, Ekici, Lu \& Monteiro}]{yuan2007}
\textsc{Yuan, M.}, \textsc{Ekici, A.}, \textsc{Lu, Z.} \& \textsc{Monteiro, R.}
  (2007).
\newblock Dimension reduction and coefficient estimation in multivariate linear
  regression.
\newblock \textit{Journal of the Royal Statistical Society Series B}
  \textbf{69}, 329--346.

\bibitem[{Zhou et~al.(2012)Zhou, Zhang, Stephens, Heuck, Tian, Sawyer,
  Cartron-Mizeracki, Qu, Keller, Epstein, Barlogie \& Shaughnessy}]{Zhou2012}
\textsc{Zhou, Y.}, \textsc{Zhang, Q.}, \textsc{Stephens, O.}, \textsc{Heuck,
  C.~J.}, \textsc{Tian, E.}, \textsc{Sawyer, J.~R.}, \textsc{Cartron-Mizeracki,
  M.-A.}, \textsc{Qu, P.}, \textsc{Keller, J.}, \textsc{Epstein, J.},
  \textsc{Barlogie, B.} \& \textsc{Shaughnessy, J.~D.} (2012).
\newblock Prediction of cytogenetic abnormalities with gene expression
  profiles.
\newblock \textit{Blood} \textbf{119}, e148--e150.

\bibitem[{Zou(2006)}]{zou2006}
\textsc{Zou, H.} (2006).
\newblock The adaptive lasso and its oracle properties.
\newblock \textit{Journal of the American Statistical Association}
  \textbf{101}, 1418--1429.

\bibitem[{Zou \& Hastie(2005)}]{zou2005}
\textsc{Zou, H.} \& \textsc{Hastie, T.} (2005).
\newblock Regularization and variable selection via the elastic net.
\newblock \textit{Journal of the Royal Statistical Society Series B}
  \textbf{67}, 301--320.

\bibitem[{Zou \& Li(2008)}]{zou2008}
\textsc{Zou, H.} \& \textsc{Li, R.} (2008).
\newblock One-step sparse estimates in nonconcave penalized likelihood models.
\newblock \textit{ANNALS OF STATISTICS} \textbf{36}, 1509.

\end{thebibliography}

\end{document}